\newtheorem{lemma}{Lemma}
\newtheorem{theorem}{Theorem}
\newtheorem{corollary}{Corollary}
\newcommand{\iid}{i.i.d.\@{}\xspace}
\DeclareMathOperator*{\bigcircle}{\bigcirc} 
\begin{document}

\title{One-Shot Min-Entropy Calculation Of Classical-Quantum States And Its Application To Quantum Cryptography}
\author{Rong Wang}
\email{ericrongwang19@gmail.com}
\author{H. F. Chau}
\email{hfchau@hku.hk}
\affiliation{Department of Physics, University of Hong Kong, Hong Kong SAR, China}

\date{\today}

\begin{abstract}
In quantum Shannon theory, various kinds of quantum entropies are used to characterize the capacities of noisy physical systems.  Among them, min-entropy and its smooth version attract wide interest especially in the field of quantum cryptography as they can be used to bound the information obtained by an adversary. However, calculating the exact value or non-trivial bounds of min-entropy are extremely difficult because the composite system dimension may scale exponentially with the dimension of its subsystem.  Here, we develop a one-shot lower bound calculation technique for the min-entropy of a classical-quantum state that is applicable to both finite and infinite dimensional reduced quantum states.  Moreover, we show our technique is of practical interest in at least three situations.  First, it offers an alternative tight finite-data analysis for the BB84 quantum key distribution scheme.  Second, it gives the best finite-key bound known to date for a variant of device independent quantum key distribution protocol.  Third, it provides a security proof for a novel source-independent continuous-variable quantum random number generation protocol.  These results show the effectiveness and wide applicability of our approach.
\end{abstract}

\maketitle

\onecolumngrid
\section{\label{sec:level1}Introduction}

Quantum Shannon theory \cite{wilde2013quantum} is an active subfield of quantum information processing whose aim is to quantitatively characterize the ultimate capacity of noisy physical systems.  Under the independent and identically distributed (\iid) assumption and in the asymptotic limit (that is, when there are infinitely many identical copies of the state), the relevant entropy measures are the von Neumann entropy and its variations. The situation is different in the non-asymptotic or non-\iid setting.  Here, more general entropy measures have to be used \cite{konig2009operational}.  One of them is the quantum conditional min-entropy, which we shall simply call min-entropy throughout this paper, together with its smooth version.
For example, if an adversary tries to guess a string of random variable conditioned on some accessible quantum states, then the maximum possible correctly guessing probability is the min-entropy of the random variable conditioned on the quantum states \cite{konig2009operational}.  Clearly, this adversarial setting is important as it includes important primitives such as quantum key distribution (QKD) \cite{BB84} and quantum random number generation (QRNG) \cite{herrero2017quantum}.

The min-entropy of raw data conditioned on adversary's state plays an important role in the information security analysis of quantum cryptography.  In fact, by the quantum leftover hashing lemma \cite{renner2008security,tomamichel2011leftover}, the smooth min-entropy \cite{renner2008security} determines the length of distillable key.  To calculate the lower bound of smooth min-entropy of the large composite system shared by the users and Eve, the usual way is to reduce it to \iid states and then sum up their von Neumann entropies via de Finetti representation theorem \cite{renner2007symmetry}, post-selection technique \cite{christandl2009postselection}, quantum asymptotic equipartition property \cite{tomamichel2009fully} or entropy accumulation theorem \cite{dupuis2020entropy,arnon2018practical}. However, these approaches cannot provide a tight length of final key when comparing to one-shot calculations using uncertainty relations for smooth entropies \cite{tomamichel2011uncertainty,tomamichel2012tight,tomamichel2017largely,wang2021tight}. Since these entropic uncertainty relations stem from the fact that the result of incompatible measurements are impossible to predict \cite{coles2017entropic}, their applications in quantum cryptography are limited as they require characterized measurements \cite{tomamichel2011uncertainty,tomamichel2012tight,tomamichel2017largely}. Therefore, it is instructive to find other one-shot approaches that are irrelevant to incompatible measurements.

In this work, we propose an alternative one-shot approach to min-entropy lower bound calculation, which can be extended naturally to the case of smooth min-entropy. In essence, given a classical-quantum (CQ) state whose quantum subsystem may be finite- or infinite-dimensional, we develop a technique to calculate the lower bound of min-entropy of its classical random variable conditioned on its quantum subsystem. Unlike the entropic uncertainty relation mentioned above, our approach can directly calculate the lower bound of min-entropy once the density matrix of a CQ state is given. Concretely, we can always assume that the classical variable of a CQ state is uniformly distributed in an adversarial setting.  If not, we apply Weyl operators to transform the distribution to a uniform one.  Clearly, this transformation does not decrease the adversary's information gain.  By working on the CQ state whose classical random variable is uniformly distributed, the min-entropy can be written in terms of the eigenvalues of adversary's state.  In this way, we reduce the complicated problem of min-entropy calculation to a simple problem of solving eigenvalues.

To illustrate the effectiveness of our one-shot min-entropy approach in solving a wide range of quantum cryptographic problems, we consider the following three applications.  First, we reproduce the best known provably secure key rate of the BB84 QKD protocol.  Next, we report the best provably secure key rate for a variation of Device-Independent (DI) QKD protocol under the assumption of independent measurement.  Finally, we show the security of a novel source-independent continuous-variable (SI-CV) QRNG protocol against general attacks.

\section{Definitions And Main Results}

Here we describe the task of min-entropy calculation in a quantum cryptographic setting. Alice holds a classical random variable $X$ that may take on values in the finite set $\{0, 1, \cdots, d-1\}$, while Eve holds her system $E$. Given a CQ state as  
\begin{equation}
\tau_{XE}=\sum_{x=0}^{d-1} p_x \ket{x}\bra{x} \otimes  \tau_{x}
\end{equation}
where $p_x$ is its probability distribution of $X$, $ \tau_{x}$ is Eve's state when $X$ takes the value $x$, Eve wants to maximize her chance of correctly guessing $X$ with the help of the quantum state in her system $E$.  This optimized guessing probability is given by
\begin{equation}
\label{p_guess}
p_{\text{guess}}(X|E)_{\tau_{XE}}:=\sup_{\{M_x\}} \sum_{x=0}^{d-1} p_x \Tr[M_x\tau_{x}]
\end{equation}
where the supremum is over all possible positive operator-valued measures (POVMs) $\{M_x\}$ on Eve's system.

The optimized guessing probability is related to the min-entropy of $\tau_{XE}$.
Given $\rho_{AB}$, recall that the min-entropy is defined by \cite{renner2008security,konig2009operational,tomamichel2010duality,furrer2011min,muller2013quantum}
\begin{equation}
H_{\text{min}}(A|B)_{\rho_{AB}}:=\sup_{\sigma_B} \{\lambda \in \mathbb{R}: 2^{-\lambda} \mathit{I}_A \otimes \sigma_B \ge \rho_{AB}\},
\end{equation}
where the supremum is over all normalized states $\sigma_B$ within subsystem $B$, and $\mathit{I}_A$ is the identity matrix of subsystem~A. (Throughout this paper, identity matrix of any subsystem is similarly defined.)  The max-entropy of any bipartite density matrix $\rho_{AB}$ is defined by \cite{konig2009operational,tomamichel2010duality,furrer2011min,muller2013quantum}
\begin{equation}
\label{Hmax}
H_{\text{max}}(A|B)_{\rho_{AB}}:=\sup_{\sigma_B} \,2\log F(\rho_{AB}, \mathit{I}_A \otimes \sigma_B),
\end{equation}
where $F(\cdot, \cdot)$ is the fidelity between two density matrices.  Moreover, the base of all logarithms denoted by the symbol $\log$ in this paper is $2$. The min- and max-entropies can be smoothed by introducing a family of sub-normalized state $\bar{\rho}_{AB}$ that are $\varepsilon$-close to $\rho_{AB}$ \cite{tomamichel2010duality}.  Precisely,
\begin{equation}
B_{\varepsilon}(\rho_{AB}):=\left\{\bar{\rho}_{AB}: \sqrt{ 1-F(\rho_{AB},\bar{\rho}_{AB})^2} \le \varepsilon \right\}
\end{equation}
is the set of all states that are $\varepsilon$-close to $\rho_{AB}$ and the smooth min- and max-entropies are given by
\begin{subequations}
\begin{align}
H_{\text{min}}^{\varepsilon}(A|B)_{\rho_{AB}} &:= \sup_{\bar{\rho}_{AB} \in B_{\varepsilon}(\rho_{AB})} H_{\text{min}}(A|B)_{\bar{\rho}_{AB}} , \\
H_{\text{max}}^{\varepsilon}(A|B)_{\rho_{AB}} &:= \inf_{\bar{\rho}_{AB} \in B_{\varepsilon}(\rho_{AB})} H_{\text{max}}(A|B)_{\bar{\rho}_{AB}}.
\end{align}
\end{subequations}

From the operational meaning of min-entropy \cite{konig2009operational}, the guessing probability is determined by the min-entropy of $X$ conditioned on $E$, that is 
\begin{equation}
\label{p_guess_min_entropy}
p_{\text{guess}}(X|E)_{\tau_{XE}}= 2^{-H_{\min}(X|E)_{\tau_{XE}}} .
\end{equation}
Therefore, in the cryptographic setting, the task is to bound $p_{\text{guess}}(X|E)$ or equivalently $H_{\text{min}}(X|E)_{\tau_{XE}}$.

\begin{lemma}
\label{Lem:correspondence}
For any state $\tau_{XE}$ shared between Alice and Eve, there exists a corresponding
\begin{equation}
\label{rho_XE}
\rho_{XE}=\frac{1}{d} \sum_{x=0}^{d-1}  \ket{x}\bra{x} \otimes \ket{\Psi_x}\bra{\Psi_x}
\end{equation}
with
\begin{equation}
H_{\min}(X|E)_{\rho_{XE}} \le H_{\min}(X|E)_{\tau_{XE}}.
\end{equation}
Here, $\ket{\Psi_x}$ is Eve's pure state when $X$ takes the value $x$.
Moreover, we can write $\ket{\Psi_x}$ as 
\begin{equation}
\ket{\Psi_x}= \sum_{y=0}^{d-1} \omega^{xy} \sqrt{\lambda_y} \ket{e_y},
\end{equation}
where $\omega$ is a primitive $d$-th root of unity, $\lambda_y$'s are the eigenvalues of Eve's state.  In other words,
\begin{equation}
\rho_E=\Tr_X[\rho_{XE}]=\frac{1}{d} \sum_{x=0}^{d-1} \ket{\Psi_x}\bra{\Psi_x}=\sum_{y=0}^{d-1} \lambda_y \ket{e_y}\bra{e_y} ,
\end{equation}
with $\sum_{y=0}^{d-1} \lambda_y=1$, and $\{ \ket{e_y} \}$ is an eigenbasis of subsystem $E$.
\end{lemma}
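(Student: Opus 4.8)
The plan is to evaluate $H_{\min}(X|E)_{\tau_{XE}}$ — equivalently the guessing probability $p_{\text{guess}}(X|E)_{\tau_{XE}}$ through Eq.~\eqref{p_guess_min_entropy} — by applying to $\tau_{XE}$ a chain of transformations, each of which can only help Eve and hence can only lower the conditional min-entropy, until the state takes the stated form. Concretely I would build the witness $\rho_{XE}$ in three moves: purify Eve's conditional states, flatten the distribution of $X$ with Weyl shift operators, and finally diagonalise Eve's reduced state to expose the covariant Fourier structure.

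First I would purify. Writing $\tau_x=\Tr_{E'}\ket{\phi_x}\bra{\phi_x}$ with $\ket{\phi_x}$ a purification on $EE'$, the state $\tau'=\sum_x p_x\ket{x}\bra{x}\otimes\ket{\phi_x}\bra{\phi_x}$ is obtained from the true state by handing $E'$ to Eve; since discarding $E'$ is a channel acting on the conditioning system, the data-processing inequality for the conditional min-entropy gives $H_{\min}(X|EE')_{\tau'}\le H_{\min}(X|E)_{\tau}$, so I may assume Eve holds a pure state for every value of $X$. Second I would uniformise $X$ by applying the shift $\ket{x}\mapsto\ket{x+a}$ coherently with an ancilla $\ket{a}_S$, averaged over $a$, keeping $S$ on Eve's side: the $X$-marginal becomes $\tfrac1d\sum_a p_{x-a}=\tfrac1d$, and since a shift whose value Eve can read is merely a relabelling of her guess, the explicit strategy ``measure $S$, run the optimal POVM of the original ensemble, then add $a$'' already achieves $p_{\text{guess}}(X|SE)\ge p_{\text{guess}}(X|E)_{\tau'}$. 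Re-purifying $SE$ if needed returns a uniform CQ state with pure Eve states.

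Third, with $X$ uniform I would diagonalise $\rho_E=\sum_y\lambda_y\ket{e_y}\bra{e_y}$ and pass to the covariant ensemble $\ket{\Psi_x}=\sum_y\omega^{xy}\sqrt{\lambda_y}\ket{e_y}=Z^x\ket{\Psi_0}$ with $Z=\sum_y\omega^y\ket{e_y}\bra{e_y}$. The claimed identity $\rho_E=\tfrac1d\sum_x\ket{\Psi_x}\bra{\Psi_x}=\sum_y\lambda_y\ket{e_y}\bra{e_y}$ then follows at once from $\tfrac1d\sum_x\omega^{x(y-y')}=\delta_{yy'}$. Because the ensemble is invariant under $(\text{shift})_X\otimes Z_E$, its guessing probability is computed in closed form by the covariant square-root measurement $M_x=\tfrac1d\rho_E^{-1/2}\ket{\Psi_x}\bra{\Psi_x}\rho_E^{-1/2}$, giving $p_{\text{guess}}(\rho_{XE})=\tfrac1d\big(\sum_y\sqrt{\lambda_y}\big)^2$, which can also be read from the dual program $p_{\text{guess}}=\min\{\Tr\sigma_E:\ I_X\otimes\sigma_E\ge\rho_{XE}\}$ with optimiser $\sigma_E\propto\sqrt{\rho_E}$.

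I expect this third move to be the main obstacle. The first two reductions are monotone almost by inspection, but replacing a generic uniform pure-state ensemble by the symmetric Fourier ensemble must be arranged so that the min-entropy still does not increase, and this is delicate: the averaging identity $\sum_x\bra{\phi_x}\rho_E^{-1/2}\ket{\phi_x}=d\,\Tr\sqrt{\rho_E}$ shows one cannot simply keep $\rho_E$ fixed and ``twirl'', since symmetrising over $(\text{shift})_X\otimes Z_E$ in fact \emph{raises} the min-entropy. Hence the covariant representative, and in particular the eigenvalues $\lambda_y$ it is assigned, must be chosen to dominate $p_{\text{guess}}(\tau)$ rather than to reproduce Eve's original marginal; confirming optimality of the square-root measurement for the covariant ensemble — equivalently, feasibility and tightness of the dual choice $\sigma_E\propto\sqrt{\rho_E}$ — is the computational heart of the argument.
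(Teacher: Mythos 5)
Your first two moves (purifying Eve's conditional states, then uniformising $X$ with a shift register held by Eve) coincide with the paper's proof and are correctly justified by data processing. The genuine gap is your third move. You propose to diagonalise the Eve marginal of the resulting uniform pure-state ensemble and then ``pass to'' the covariant ensemble $\ket{\Psi_x}=\sum_y\omega^{xy}\sqrt{\lambda_y}\ket{e_y}$ built from \emph{those} eigenvalues, deferring the proof that this substitution cannot increase the min-entropy. That substitution is not merely delicate; it is false in general. Counterexample with $d=3$: take the uniform ensemble $\{\ket{0},\ket{0},\ket{1}\}$. Its guessing probability is $2/3$, and its marginal has spectrum $(2/3,1/3,0)$; the covariant ensemble with these eigenvalues has guessing probability $\frac{1}{3}\left(\sqrt{2/3}+\sqrt{1/3}\right)^2=(3+2\sqrt{2})/9\approx 0.648<2/3$, i.e.\ strictly \emph{larger} min-entropy --- the wrong direction. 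So no domination argument can rescue a plan that keeps the marginal spectrum fixed: the $\lambda_y$ appearing in the lemma simply cannot be the eigenvalues of that marginal.

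What is missing is the constructive idea that makes any replacement step unnecessary: keep the shift register \emph{coherent} rather than classical. This is exactly the purification choice hiding in your phrase ``re-purifying $SE$ if needed,'' but you then discard the structure it provides. Concretely, the paper defines $\rho_{XEFG}$ whose conditional Eve state for outcome $x$ is the pure state $\sum_y\sqrt{p_y}\ket{\psi_y}\otimes\ket{g_{x-y}}$. Dephasing the register $G$ in the basis $\{\ket{g_y}\}$ --- a channel acting only on Eve's side --- recovers the uniformised classical state, so data processing gives the desired inequality. The covariant form then emerges automatically: the conditionals satisfy $\ket{\Psi_{x+1}}=(\mathit{I}_{EF}\otimes V_G)\ket{\Psi_x}$ with $V_G$ the cyclic shift on $G$, and expanding in the Fourier basis $\ket{h_y}=\frac{1}{\sqrt{d}}\sum_z\omega^{-yz}\ket{g_z}$ of $G$ yields $\ket{\Psi_x}=\sum_y\omega^{xy}\sqrt{\lambda_y}\ket{e_y}$ with $\sqrt{\lambda_y}\ket{e_y}=\left(\frac{1}{\sqrt{d}}\sum_z\omega^{-yz}\sqrt{p_z}\ket{\psi_z}\right)\otimes\ket{h_y}$. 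The $\ket{e_y}$ are orthonormal \emph{because each carries the orthogonal flag} $\ket{h_y}$, so these $\lambda_y$ really are eigenvalues of the new (enlarged) Eve marginal; in the counterexample above they equal $(5/9,2/9,2/9)$, giving guessing probability $\approx 0.95\ge 2/3$ as required. Finally, note that your square-root-measurement computation and the dual certificate $\sigma_E\propto\sqrt{\rho_E}$, while correct, are the content of Theorem~\ref{Thrm:H_min} and Corollary~\ref{Cor:min_ent_for_rho_XE}, not of this lemma; they do not fill the gap identified above.
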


\par \medskip

\begin{proof}
By purifying each $\tau_x$ in Eve's system $E$ as $\ket{\psi_x}$ in the composed system $EF$, we obtain the state
\begin{equation}
\tau_{XEF}=\sum_{x=0}^{d-1} p_x \ket{x}\bra{x} \otimes \ket{\psi_x}\bra{\psi_x},
\end{equation}
on the enlarged system ${XEF}$.  From the data-processing inequality for min-entropy \cite{tomamichel2010duality}, we have
\begin{equation}
\label{tau_XEF}
H_{\text{min}}(X|EF)_{\tau_{XEF}} \le H_{\text{min}}(X|E)_{\tau_{XE}}.
\end{equation}
In the case of quantum cryptography, we usually assume that Eve can access to infinite computational resources. Therefore we have the liberty to assume that Eve holds the purification of $\tau_{x}$ for each $x$. Next, we introduce a symmetrical operation using the Weyl operators. Let $\{\ket{x}\}$ be a computational basis of system~A, then the Weyl operators are defined by 
\begin{equation}
\label{Uyz}
U_{yz}:=\sum_{x=0}^{d-1} \omega^{xz} \ket{x + y}\bra{x},
\end{equation}
where $y, z \in \{0,1,\cdots,d-1\}$ and summation inside the state ket is performed modulo $d$.
(From now on, all arithmetical operations inside state kets and related indices are
performed modulo $d$.)
Further define
\begin{equation}
\tau_{XEFG} := \frac{1}{d} \sum_{y=0}^{d-1}  (U_{y0} \otimes \mathit{I}_{EF}) \tau_{XEF} (U^{\dagger} _{y0} \otimes \mathit{I}_{EF})\otimes \ket{g_y} \bra{g_y},
\end{equation}
where $\{\ket{g_y}\}$ is a set of eigenbasis within Eve's subsystem $G$ whose aim is to record the information of $y$.  This symmetrical operation can be understood as follows.  Alice uniformly and randomly chooses a value $y$, applies $U_{y0}$ to her classical system.  She then publicly announces $y$ so that Eve can record this information using the register $G$.  From Appendix~A.3 in Ref. \cite{tomamichel2010duality}, this operation does not change the min-entropy. Using this trick, we obtain in a uniformly distributed random variable $X$.  This makes our later calculation easier.  The closure property of modulo $d$ allows us to write $\tau_{XEFG}$ as
\begin{equation}
\label{tau_XEFG}
\tau_{XEFG}= \frac{1}{d} \sum_{x=0}^{d-1} \ket{x}\bra{x} \otimes \left(\sum_{y=0}^{d-1} p_y \ket{\psi_y}\bra{\psi_y} \otimes \ket{g_{x - y}} \bra{g_{x - y}} \right) .
\end{equation}
Given the state
\begin{equation}
\rho_{XEFG}= \frac{1}{d} \sum_{x=0}^{d-1} \ket{x}\bra{x} \otimes P \left\{ \sum_{y=0}^{d-1} \sqrt{p_y} \ket{\psi_y} \otimes \ket{g_{x - y}} \right\},
\end{equation}
where $P\{\ket{x}\}:=\ket{x}\bra{x}$, it is straightforward to see that 
\begin{equation}
\tau_{XEFG}= \sum_{y=0}^{d-1} \ket{g_y}\bra{g_y} \rho_{XEFG}  \ket{g_y}\bra{g_y},
\end{equation}
where the set of projectors $\{\ket{g_y}\bra{g_y}\}$ determines a projective measurement. Using the data-processing inequality for min-entropy in Sec. V of Ref. \cite{tomamichel2010duality}, we have 
\begin{equation}
H_{\text{min}}(X|EFG)_{\rho_{XEFG}} \le H_{\text{min}}(X|EFG)_{\tau_{XEFG}}.
\end{equation}
 We now pick a mutually unbiased basis $\{\ket{h_y}\}_{y=0}^{d-1}$ where
\begin{equation}
\ket{h_y}= \frac{1}{\sqrt{d}} \sum_{z=0}^{d-1} \omega^{-yz} \ket{g_z} .
\end{equation}
Then $\rho_{XEFG}$ can be rewritten as
\begin{equation}
\rho_{XEFG} = \frac{1}{d} \sum_{x=0}^{d-1} \ket{x}\bra{x} \otimes P \left\{ \sum_{y=0}^{d-1} \omega^{xy} \left( \frac{1}{\sqrt{d}} \sum_{z=0}^{d-1} \omega^{-yz} \sqrt{p_z} \ket{\psi_z} \right) \otimes \ket{h_y} \right\},
\end{equation}
where we also make use of the closure property of modulo $d$. Note that 
\begin{equation}
0 \le \left| \frac{1}{\sqrt{d}} \sum_{z=0}^{d-1} \omega^{-yz} \sqrt{p_z} \ket{\psi_z} \right|^2 \le \frac{1}{d} \left(\sum_{z=0}^{d-1} \sqrt{p_z} \right)^2 \le 1,
\end{equation}
where we have used the Cauchy–Schwarz inequality to arrive at the last inequality. Finally, we set 
\begin{subequations}
\begin{align}
\lambda_y &=\left| \frac{1}{\sqrt{d}} \sum_{z=0}^{d-1} \omega^{-yz} \sqrt{p_z} \ket{\psi_z} \right|^2 , \\
\sqrt{\lambda_y} \ket{e_y} &= \left( \frac{1}{\sqrt{d}} \sum_{z=0}^{d-1} \omega^{-yz} \sqrt{p_z} \ket{\psi_z} \right) \otimes \ket{h_y} , \\
\ket{\Psi_x} &= \sum_{y=0}^{d-1} \omega^{xy} \left( \frac{1}{\sqrt{d}} \sum_{z=0}^{d-1} \omega^{-yz} \sqrt{p_z} \ket{\psi_z} \right) \otimes \ket{h_y} ,
\end{align}
\end{subequations}
and rephrase Eve's systems $EFG$ using one system $E$. By combining Eqs. \eqref{tau_XEF}, \eqref{tau_XEFG} and the setting above, we obtain $H_{\text{min}}(X|E)_{\rho_{XE}} \le H_{\text{min}}(X|E)_{\tau_{XE}}$.  This completes our proof.
\end{proof}

\begin{theorem}
\label{Thrm:H_min}
The min-entropy of the state $\rho_{XE}$ given by Eq. \eqref{rho_XE} in Lemma~\ref{Lem:correspondence} equals
\begin{equation}
\label{min-entropy}
H_{\min}(X|E)_{\rho_{XE}}=\log d-\log (\sum_{y=0}^{d-1} \sqrt{\lambda_y})^2.
\end{equation}
 Hence,
\begin{equation}
 H_{\min}(X|E)_{\tau_{XE}} \ge \log d-\log (\sum_{y=0}^{d-1} \sqrt{\lambda_y})^2.
\end{equation}
\end{theorem}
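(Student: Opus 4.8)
The plan is to establish the equality \eqref{min-entropy} by proving the two bounds $H_{\min}(X|E)_{\rho_{XE}}\le\log d-\log S^2$ and $H_{\min}(X|E)_{\rho_{XE}}\ge\log d-\log S^2$ separately, where for brevity I write $S:=\sum_{y=0}^{d-1}\sqrt{\lambda_y}$. The upper bound I would obtain from the operational characterization \eqref{p_guess_min_entropy} by constructing an explicit guessing measurement, and the lower bound directly from the variational definition of $H_{\min}$ by constructing an explicit feasible operator $\sigma_E$. The key simplification throughout is that the state $\rho_{XE}$ of Eq.~\eqref{rho_XE} is block diagonal in the classical register $X$, so every computation decouples across $x$ and reduces to properties of $\rho_E=\sum_y\lambda_y\ket{e_y}\bra{e_y}$ and of the vectors $\ket{\Psi_x}=\sum_y\omega^{xy}\sqrt{\lambda_y}\ket{e_y}$.

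For the upper bound, I would try to guess $X$ with the \emph{square-root} (Fourier) measurement $M_x=\ket{\phi_x}\bra{\phi_x}$, where $\ket{\phi_x}:=\frac{1}{\sqrt d}\sum_{y=0}^{d-1}\omega^{xy}\ket{e_y}$. These $d$ vectors are orthonormal, so $\sum_x M_x=I_E$ and $\{M_x\}$ is a legitimate POVM. A direct computation gives $\langle\phi_x|\Psi_x\rangle=\frac{1}{\sqrt d}\sum_y\sqrt{\lambda_y}=S/\sqrt d$, independently of $x$, so the resulting guessing probability is $\sum_{x}\frac1d\,|\langle\phi_x|\Psi_x\rangle|^2=S^2/d$. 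By Eq.~\eqref{p_guess_min_entropy} this lower bound on $p_{\text{guess}}(X|E)_{\rho_{XE}}$ translates into $H_{\min}(X|E)_{\rho_{XE}}\le\log d-\log S^2$.

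For the lower bound, I would test the variational definition of $H_{\min}$ with the unnormalized operator $\sigma_E:=\frac{S}{d}\,\rho_E^{1/2}=\frac{S}{d}\sum_{y}\sqrt{\lambda_y}\ket{e_y}\bra{e_y}$, which has trace $\Tr\sigma_E=\frac{S}{d}\sum_y\sqrt{\lambda_y}=S^2/d$ and the same support as $\rho_E$. Since $\rho_{XE}$ is block diagonal in $X$, the inequality $I_X\otimes\sigma_E\ge\rho_{XE}$ is equivalent to the $d$ block inequalities $\sigma_E\ge\frac1d\ket{\Psi_x}\bra{\Psi_x}$. Each follows from the standard criterion that, for $\ket{\Psi_x}$ lying in the support of $\sigma_E$, one has $\sigma_E\ge\frac1d\ket{\Psi_x}\bra{\Psi_x}$ if and only if $\frac1d\bra{\Psi_x}\sigma_E^{-1}\ket{\Psi_x}\le1$; and here $\frac1d\bra{\Psi_x}\sigma_E^{-1}\ket{\Psi_x}=\frac1d\cdot\frac{d}{S}\sum_y\sqrt{\lambda_y}=1$, so the inequality holds and is in fact saturated. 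Normalizing $\sigma_E$ and reading off $2^{-\lambda}=\Tr\sigma_E=S^2/d$ then shows that $\lambda=\log d-\log S^2$ is attained in the supremum defining $H_{\min}$, giving $H_{\min}(X|E)_{\rho_{XE}}\ge\log d-\log S^2$.

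The step I expect to be the main obstacle is verifying the operator inequality $I_X\otimes\sigma_E\ge\rho_{XE}$, and in particular doing so cleanly when $\rho_E$ is rank-deficient, so that $\rho_E^{-1/2}$ (hence $\sigma_E^{-1}$) must be read as a pseudoinverse on the support of $\rho_E$ while one checks that every $\ket{\Psi_x}$ indeed lies in that support. The feature that makes the theorem work is that this inequality is exactly saturated for every $x$: the Fourier measurement $\{\ket{\phi_x}\bra{\phi_x}\}$ and the operator $\sigma_E$ form a matched pair attaining the common value $S^2/d$, which is why the upper and lower bounds coincide and the min-entropy is pinned to $\log d-\log S^2$. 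The concluding inequality $H_{\min}(X|E)_{\tau_{XE}}\ge\log d-\log S^2$ is then immediate by combining Eq.~\eqref{min-entropy} with Lemma~\ref{Lem:correspondence}.
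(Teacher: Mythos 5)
Your proof is correct, but it follows a genuinely different route from the paper. The paper purifies $\rho_{XE}$ to a tripartite state $\ket{\Phi}_{XX'E}$, invokes the duality $H_{\min}(X|E)_{\rho_{XE}}+H_{\max}(X|X')_{\rho_{XX'}}=0$, and then computes the max-entropy of the resulting Bell-diagonal state $\rho_{XX'}$ by showing, via a Weyl-operator twirling argument and concavity of fidelity, that the maximally mixed $\sigma_{X'}$ is optimal. You instead run a primal--dual argument entirely on Eve's side: the Fourier measurement $\{\ket{\phi_x}\bra{\phi_x}\}$ certifies $p_{\text{guess}}\ge S^2/d$ (hence the upper bound on $H_{\min}$ via Eq.~\eqref{p_guess_min_entropy}), while the operator $\sigma_E\propto\rho_E^{1/2}$ is feasible in the variational definition and certifies the matching lower bound, with the block-diagonal structure in $X$ reducing everything to the rank-one inequality $\sigma_E\ge\frac1d\ket{\Psi_x}\bra{\Psi_x}$, correctly checked with the pseudoinverse criterion on the support of $\rho_E$. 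Your route is more elementary and self-contained --- it bypasses the min/max duality theorem and all fidelity machinery, needing only the operational identity \eqref{p_guess_min_entropy} and the definition of $H_{\min}$ --- and it yields the optimal POVM \eqref{POVM_M_x} as a by-product, so it proves Corollary~\ref{Cor:min_ent_for_rho_XE} simultaneously rather than as a separate step. The paper's route, in exchange, exhibits the Bell-diagonal structure of $\rho_{XX'}$ and a symmetry argument that is reusable for other group-covariant states, and keeps the computation inside the smooth-entropy formalism used in the rest of the paper. One small repair your argument needs: when Eve's space is larger than $\mathrm{span}\{\ket{e_y}\}$ (in particular infinite-dimensional, which the paper cares about), $\sum_x\ket{\phi_x}\bra{\phi_x}$ is only the projector $\Pi$ onto that span, not $\mathit{I}_E$; complete the POVM by replacing, say, $M_0$ with $\ket{\phi_0}\bra{\phi_0}+(\mathit{I}_E-\Pi)$, which changes no guessing probability since every $\ket{\Psi_x}$ lies in $\Pi$'s range.
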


\begin{proof}
 From Lemma~\ref{Lem:correspondence}, it suffices to prove the validity of Eq.~\eqref{min-entropy}.
By purifying $\rho_{XE}$ with another system $X'$, an alternative purification is given by 
\begin{equation}
\ket{\Phi}_{XX'E}=\sum_{y=0}^{d-1} \sqrt{\lambda_y} \left(\frac{1}{\sqrt{d}} \sum_{x=0}^{d-1} \omega^{xy} \ket{xx} \right) \ket{e_y},
\end{equation}
where we extend the definition of $X$ from a classical random variable to a quantum system and define the state $\ket{xx}$ within composite system $XX'$. Owing to the duality between min-entropy and max-entropy \cite{tomamichel2010duality}, we have 
\begin{equation}
H_{\text{min}}(X|E)_{\rho_{XE}}+H_{\text{max}}(X|X')_{\rho_{XX'}}=0,
\end{equation}
where 
\begin{equation}
\rho_{XX'}=\Tr_E\Big[\ket{\Phi}_{XX'E}\bra{\Phi}\Big]=\sum_{y=0}^{d-1} \lambda_y P \left\{\frac{1}{\sqrt{d}} \sum_{x=0}^{d-1} \omega^{xy} \ket{xx} \right\}
\end{equation}
is the state after partially tracing system $E$ out. Note that $\sum_{x=0}^{d-1} \omega^{xy} \ket{xx} / \sqrt{d}$ is the generalized Bell state for each $y \in \{0, 1, \cdots, d-1\}$, and therefore $\rho_{XX'}$ is Bell-diagonal. In what follows, we are going to prove that 
\begin{equation}
H_{\text{max}}(X|X')_{\rho_{XX'}}=\sup_{\sigma_{X'}} \,2\log F(\rho_{XX'}, \mathit{I}_X \otimes \sigma_{X'})= \log \left[ \frac{1}{d} \left(\sum_{y=0}^{d-1} \sqrt{\lambda_y} \right)^2 \right],
\end{equation}
where $\sigma_{X'}$ would be optimized over all states within the system $X'$. Let $\sigma_{X'}=\mathit{I}_{X'}/d$, then 
\begin{equation}
H_{\text{max}}(X|X')_{\rho_{XX'}} \ge 2\log F(\rho_{XX'}, \mathit{I}_X \otimes \frac{\mathit{I}_{X'}}{d}) = \log \left[ \frac{1}{d} \left(\sum_{y=0}^{d-1} \sqrt{\lambda_y} \right)^2 \right].
\end{equation}
Let $\sigma$ be the state that 
\begin{equation}
H_{\text{max}}(X|X')_{\rho_{XX'}} = 2\log F(\rho_{XX'}, \mathit{I}_X \otimes \sigma), 
\end{equation}
we can complete the proof by showing
\begin{equation}
F(\rho_{XX'}, \mathit{I}_X \otimes \sigma) \le F(\rho_{XX'}, \mathit{I}_X \otimes \frac{\mathit{I}_{X'}}{d}) .
\end{equation}
With the Weyl operators defined in Eq. \eqref{Uyz}, we apply $U^*_{yz} \otimes U_{yz}$ to the states in system $XX'$ to get
\begin{equation}
F(\rho_{XX'}, \mathit{I}_X \otimes \sigma)=F\Big((U^*_{yz} \otimes U_{yz})\rho_{XX'}(U^*_{yz} \otimes U_{yz})^{\dagger}, (U^*_{yz} \otimes U_{yz})\mathit{I}_X \otimes \sigma(U^*_{yz} \otimes U_{yz})^{\dagger}\Big) = F(\rho_{XX'}, \mathit{I}_X \otimes U_{yz} \sigma U^{\dagger}_{yz}).
\end{equation}
Here, we make use of the properties that fidelity is invariant under unitary transformations and that the generalized Bell states are the eigenstates of all the $U^*_{yz} \otimes U_{yz}$. As fidelity is concave, we have 
\begin{equation}
F(\rho_{XX'}, \mathit{I}_X \otimes \sigma) = \frac{1}{d^2} \sum_{y,z=0}^{d-1} F(\rho_{XX'}, \mathit{I}_X \otimes U_{yz} \sigma U^{\dagger}_{yz}) \le F(\rho_{XX'}, \mathit{I}_X \otimes  \frac{1}{d^2} \sum_{y,z=0}^{d-1} U_{yz} \sigma U^{\dagger}_{yz}) =  F(\rho_{XX'}, \mathit{I}_X \otimes \frac{\mathit{I}_{X'}}{d}).
\end{equation}
Here we make use of the fact that applying $\{U_{yz}\}$ uniformly on any state will result in 
the unique maximally mixed state.  This proves our theorem.
\end{proof}

\begin{corollary} 
\label{Cor:min_ent_for_rho_XE}
 For the state $\rho_{XE}$ given by Eq. \eqref{rho_XE} in Lemma~\ref{Lem:correspondence} and its min-entropy expressed in Eq. \eqref{min-entropy} in Theorem~\ref{Thrm:H_min}, the guessing probability $p_{\text{guess}}(X|E)_{\rho_{XE}}$ can be attained by the POVM $\{ M_x \}_{x=0}^{d-1}$ where
\begin{equation}
 \label{POVM_M_x}
M_x=P \left \{ \frac{1}{\sqrt{d}} \sum_{y=0}^{d-1} \omega^{xy} \ket{e_y} \right\}.
\end{equation}
\end{corollary}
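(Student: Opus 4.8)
The plan is to reduce the claim to a direct computation, since the heavy lifting has already been done. By Theorem~\ref{Thrm:H_min} together with Eq.~\eqref{p_guess_min_entropy}, the guessing probability for $\rho_{XE}$ is fixed at
\begin{equation}
p_{\text{guess}}(X|E)_{\rho_{XE}}=2^{-H_{\min}(X|E)_{\rho_{XE}}}=\frac{1}{d}\left(\sum_{y=0}^{d-1}\sqrt{\lambda_y}\right)^2 .
\end{equation}
Because $p_{\text{guess}}$ in Eq.~\eqref{p_guess} is a supremum over all POVMs, it suffices to exhibit a single measurement that attains this value; I claim the proposed $\{M_x\}$ does. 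So the whole proof amounts to (i) checking that $\{M_x\}$ is a legitimate POVM and (ii) evaluating $\sum_x p_x\Tr[M_x\tau_x]$ with $p_x=1/d$ and $\tau_x=\ket{\Psi_x}\bra{\Psi_x}$, then matching it to the displayed value.

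For step (i), write $\ket{m_x}:=\frac{1}{\sqrt d}\sum_{y=0}^{d-1}\omega^{xy}\ket{e_y}$ so that $M_x=\ket{m_x}\bra{m_x}$. Positivity is immediate as each $M_x$ is a rank-one projector scaled by a nonnegative number. Completeness follows from the discrete Fourier orthogonality relation $\sum_{x=0}^{d-1}\omega^{x(y-y')}=d\,\delta_{y,y'}$, which gives
\begin{equation}
\sum_{x=0}^{d-1}M_x=\frac{1}{d}\sum_{x=0}^{d-1}\sum_{y,y'=0}^{d-1}\omega^{x(y-y')}\ket{e_y}\bra{e_{y'}}=\sum_{y=0}^{d-1}\ket{e_y}\bra{e_y},
\end{equation}
the projector onto the support of $\rho_E$. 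If Eve's Hilbert space is strictly larger than $\operatorname{span}\{\ket{e_y}\}$, I would simply absorb the orthogonal complement into one element (say $M_0$); this changes nothing in the guessing probability because every $\ket{\Psi_x}$ lives inside that support.

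For step (ii), the key observation — and essentially the only real content of the corollary — is a phase cancellation. Since $\{\ket{e_y}\}$ is orthonormal, the phase $\omega^{xy}$ coming from $\bra{m_x}$ cancels exactly the phase $\omega^{xy}$ appearing in $\ket{\Psi_x}=\sum_y\omega^{xy}\sqrt{\lambda_y}\ket{e_y}$, so that
\begin{equation}
\langle m_x|\Psi_x\rangle=\frac{1}{\sqrt d}\sum_{y=0}^{d-1}\sqrt{\lambda_y},
\end{equation}
which is \emph{independent of} $x$. Hence $\Tr[M_x\ket{\Psi_x}\bra{\Psi_x}]=\lvert\langle m_x|\Psi_x\rangle\rvert^2=\frac{1}{d}(\sum_y\sqrt{\lambda_y})^2$ for every $x$, and summing against $p_x=1/d$ yields $\frac{1}{d}(\sum_y\sqrt{\lambda_y})^2=p_{\text{guess}}(X|E)_{\rho_{XE}}$. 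Since no POVM can do better than the supremum computed from the min-entropy, this shows $\{M_x\}$ is optimal. I do not expect any genuine obstacle here: there is no optimization to carry out, as optimality is inherited from Theorem~\ref{Thrm:H_min}, and the only points requiring a little care are the completeness relation and the (harmless) subtlety of completing the POVM when Eve's system exceeds the dimension of $\operatorname{span}\{\ket{e_y}\}$.
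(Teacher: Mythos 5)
Your proposal is correct and takes essentially the same approach as the paper: the paper's proof likewise equates $p_{\text{guess}}(X|E)_{\rho_{XE}} = 2^{-H_{\min}(X|E)_{\rho_{XE}}} = \frac{1}{d}\left(\sum_{y=0}^{d-1}\sqrt{\lambda_y}\right)^2$ with the value attained by $\{M_x\}$ through the same phase-cancellation computation $\bra{m_x}\ket{\Psi_x}=\frac{1}{\sqrt{d}}\sum_y\sqrt{\lambda_y}$. Your extra verifications (positivity, Fourier-orthogonality completeness, and absorbing the orthogonal complement of $\operatorname{span}\{\ket{e_y}\}$ into one element) are details the paper leaves implicit, and you handle them correctly.
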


\begin{proof}
 According to the definition in Eq. \eqref{p_guess_min_entropy}, we have
\begin{equation}
p_{\text{guess}}(X|E)_{\rho_{XE}}= \frac{1}{d}\left(\sum_{y=0}^{d-1} \sqrt{\lambda_y} \right)^2 =
\frac{1}{d} \sum_{x=0}^{d-1} \left|\sum_{y=0}^{d-1} \omega^{-xy} \bra{e_y}  \ket{\Psi_x} \right|^2.
\end{equation}
This proves the corollary.
\end{proof}

\section{Application To Quantum Cryptography}

\subsection{Application To Discrete Variable Quantum Key Distribution}

We now demonstrate how to apply our approach to the finite-data security analysis in DV QKD protocol, using BB84 protocol as an example.  For completeness, we briefly describe the EB version of BB84 protocol in Procedure~\ref{procedure:QKD}.

\begin{table}
 \caption{\label{procedure:QKD}The Entanglement-Based BB84 Protocol}
 \begin{enumerate}
  \item \emph{State preparation:} Alice prepares a pair of maximally entangled qubit $(\ket{00}+\ket{11})/\sqrt{2}$.
   \label{procedure:QKD:preparation}
  \item \emph{State distribution:} Alice sends the second qubit of state $(\ket{00}+\ket{11})/\sqrt{2}$ to Bob through an insecure channel.
  \item \emph{Detection:} Bob publicly announces whether he detects the qubit or not.  Alice keeps her qubit only if Bob successfully detects the qubit.
   \label{procedure:QKD:detection}
  \item \emph{Measurement:} After repeating steps~\ref{procedure:QKD:preparation}--\ref{procedure:QKD:detection} many times, Alice (Bob) randomly and independently chooses a basis $\mathbb{Z}$ or $\mathbb{X}$ to measure her (his) each of their qubit in hand.
  \item \emph{Sifting:} Alice (Bob) publicly announces her (his) basis information.  They keep only the basis-matched data.  Denote the number of qubits that they both measured in $\mathbb{X}$ and $\mathbb{Z}$ bases by $(n+k)$ and $k$, respectively.
  \item \emph{Parameter estimation:} Alice and Bob randomly disclose $k$ of the $(n+k)$ bits of data from their $\mathbb{X}$ measurement results to compute the bit error frequency $e_x$.  They also announce all the $k$ bits of data from their $\mathbb{Z}$ measurement to compute the bit error frequency $e_z$. They continue only if both of $e_x$ and $e_z$ do not exceed a predefined threshold.
  \item \emph{Error correction:} For the remaining $n$ bits of data from their $\mathbb{X}$ basis measurement, Alice and Bob execute an information reconciliation scheme that leaks at most $\text{leak}_{\text{EC}}+\left \lceil \log_{2}\frac{1}{\varepsilon_\text{cor}} \right \rceil$ bits if the protocol is $\varepsilon_\text{cor}$-correct.
   \label{procedure:QKD:EC}
  \item \emph{Privacy amplification:} They apply a random two-universal hash function to the error-corrected bits in step~\ref{procedure:QKD:EC} to extract $\ell$ bits of secret key.
 \end{enumerate}
\end{table}

According to the quantum leftover hashing lemma \cite{tomamichel2011leftover}, the extracted secure key length is determined by the smooth min-entropy of raw key conditioned on Eve's quantum side information. Therefore, the core issue is to express this smooth min-entropy in terms of the observable statistics.

Suppose we were to measure the $n$-round’s key generation data resulted from performing $\mathbb{X}$ measurement on $n$-pair qubit in the $\mathbb{Z}$ basis.  We may estimate the upper bound of frequency of bit error denoted as $\hat{e}_z$ from $e_z$ as follows. Picking any concentration inequality without replacement, such as the Serfling inequality \cite{tomamichel2012tight}, the upper bound $\hat{e}_z$ can be written as a function of parameters $e_z$, $m$, $k$ and the small failure probability $\varepsilon^2$, namely,
\begin{equation}
\hat{e}_z=e_z+\sqrt{\frac{(n+k)(k+1)}{nk^2}\ln \frac{1}{\varepsilon}}.
 \label{E:hat_e}
\end{equation}
(Note that Eq.~\eqref{E:hat_e} depends on the concentration inequality used.  However, the detailed form of Eq.~\eqref{E:hat_e} is not the main focus of this paper.)
In what follows, we express smooth min-entropy in terms of $\hat{e}_z$, with $\varepsilon$ taken as the smooth parameter.

Following the reduction idea in Lo-Chau proof  \cite{lo1999unconditional} and similar techniques \cite{kraus2005lower,renner2005information,renner2008security}, we reduce this $n$-pair of qubit to the generalized Bell-diagonal one. To express this, we define two $n$-bit strings $\bm{i}, \bm{j} \in \{0, 1\}^n$, and denote the $k$-th bits of the bit strings of $\bm{i}$ and $\bm{j}$ by $\bm{i}_k, \bm{j}_k \in \{0, 1\}$, respectively.  Let $\sigma_{0, 0}:=\mathit{I}$ be the two-dimensional identity matrix, and $\sigma_{1, 0}:=\sigma_z$, $\sigma_{0, 1}:=\sigma_x$, $\sigma_{1, 1}:=\sigma_y$ to be the three Pauli matrices. Further define
\begin{equation}
\label{U_ij}
U_{\bm{i}, \bm{j}}:=\bigotimes_{k=1}^{n} \sigma_{\bm{i}_k, \bm{j}_k} .
\end{equation}
Then, the $n$-pair Bell state is thus given by
\begin{equation}
\ket{\Phi_{\bm{i}, \bm{j}} }:= \left(\mathit{I}^{\otimes n}_A \otimes U_{\bm{i}, \bm{j}} \right)\left( \frac{\ket{00}+\ket{11}}{\sqrt{2}} \right)^{\otimes n}.
\end{equation}
As a consequence, this $n$-pair Bell-diagonal state shared by Alice and Bob is given by
\begin{equation}
\label{rhoabn}
\begin{aligned}
\rho^n_{AB}:=\sum_{\bm{i}, \bm{j}} \lambda_{\bm{i}, \bm{j}} \ket{\Phi_{\bm{i}, \bm{j}} } \bra{\Phi_{\bm{i}, \bm{j}} },
\end{aligned}
\end{equation}
where $\{\lambda_{\bm{i}, \bm{j}}\}$ are some real-valued and non-negative coefficients satisfying $\sum_{\bm{i}, \bm{j}} \lambda_{\bm{i}, \bm{j}}=1$.

\begin{theorem} \label{Thrm:Psi_ABE}
 Let Eve holds the purification of $\rho^n_{AB}$ defined in Eq. \eqref{rhoabn}, then the states shared by Alice, Bob and Eve is given by
\begin{equation}
\ket{\Psi}^n_{ABE}:=\sum_{\bm{i}, \bm{j}} \sqrt{\lambda_{\bm{i}, \bm{j}}} \ket{\Phi_{\bm{i}, \bm{j}}} \otimes \ket{e_{\bm{i}, \bm{j}}}.
\end{equation}
Here, $\ket{e_{\bm{i}, \bm{j}}}$'s are mutually orthogonal within Eve's system $E$. Let $\rho^n_{XE}$ be the state after Alice measures all her qubits in the $\mathbb{X}$ basis and then traces out Bob's system.  Then, the min-entropy of $\rho^n_{\bm{X}E}$ equals
\begin{equation}
\label{hminrho_xe}
H_{\min}(\bm{X}|E)_{\rho^n_{\bm{X}E}}=n-\log \left[\sum_{\bm{j}} \left( \sum_{\bm{i}} \sqrt{\lambda_{\bm{i}, \bm{j}}} \right)^2 \right].
\end{equation}
\end{theorem}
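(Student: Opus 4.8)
The plan is to reduce $\rho^n_{\bm{X}E}$ to the canonical classical-quantum form of Lemma~\ref{Lem:correspondence} and then invoke Theorem~\ref{Thrm:H_min} block by block. First I would let Alice apply her $\mathbb{X}$-basis measurement to $\ket{\Psi}^n_{ABE}$ and compute, for each outcome string $\bm{x}\in\{0,1\}^n$, the unnormalised post-measurement vector on $BE$. A short Pauli-by-Pauli calculation on a single Bell pair shows that measuring Alice's qubit in the $\mathbb{X}$ basis collapses Bob's qubit into a definite $\mathbb{X}$-eigenstate, where the eigenstate is shifted by one of the two error patterns and an overall sign $(-1)^{\bm{x}\cdot(\,\cdot\,)}$ is produced by the complementary pattern. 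Taking the tensor product over the $n$ pairs, each outcome $\bm{x}$ occurs with probability $1/d$ with $d=2^n$, so $\bm{X}$ is automatically uniform, exactly as the canonical form requires.

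The decisive structural fact is what the partial trace over $B$ does. Because the collapsed Bob states form an orthonormal product basis, tracing out $B$ decoheres Eve's system into mutually orthogonal blocks, one for each value of the pattern (say $\bm{j}$) that labels Bob's eigenstate. Writing $E=\bigoplus_{\bm{j}} \mathcal{H}_{\bm{j}}$ with $\mathcal{H}_{\bm{j}}=\mathrm{span}\{\ket{e_{\bm{i},\bm{j}}}\}_{\bm{i}}$ (the blocks are orthogonal because the $\ket{e_{\bm{i},\bm{j}}}$ are mutually orthogonal), I would check that the classical-quantum state restricted to block $\bm{j}$ is, after absorbing the $\bm{x}$-independent signs into the basis vectors, precisely a state of the form of Eq.~\eqref{rho_XE} with eigenvalues $\{\lambda_{\bm{i},\bm{j}}\}_{\bm{i}}$ and the characters $\omega^{xy}$ replaced by $(-1)^{\bm{x}\cdot\bm{i}}$. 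This is the heart of the argument: the partial trace simultaneously makes $\bm{X}$ uniform and splits the problem into independent canonical sub-problems built from the $\mathbb{Z}_2^{n}$ Weyl operators in place of the $\mathbb{Z}_d$ ones, for which the group-theoretic proof of Theorem~\ref{Thrm:H_min} applies verbatim.

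I would then pass to the guessing-probability picture via Eq.~\eqref{p_guess_min_entropy}. Since the conditional Eve states for a fixed $\bm{x}$ have support across the orthogonal blocks, any candidate POVM $\{M_{\bm{x}}\}$ can be replaced without loss by its block-diagonal part, whereupon the normalisation $\sum_{\bm{x}}M_{\bm{x}}=I_E$ decouples block by block. Hence $p_{\text{guess}}(\bm{X}|E)$ is the sum of the per-block guessing probabilities, and Theorem~\ref{Thrm:H_min} (equivalently Corollary~\ref{Cor:min_ent_for_rho_XE}) evaluates the contribution of block $\bm{j}$ to $\frac{1}{d}(\sum_{\bm{i}}\sqrt{\lambda_{\bm{i},\bm{j}}})^2$. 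Summing over $\bm{j}$ and taking $-\log$ yields $H_{\min}(\bm{X}|E)_{\rho^n_{\bm{X}E}}=n-\log[\sum_{\bm{j}}(\sum_{\bm{i}}\sqrt{\lambda_{\bm{i},\bm{j}}})^2]$, as claimed.

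The main obstacle I anticipate is not the invocation of Theorem~\ref{Thrm:H_min} but the bookkeeping that establishes the block decomposition: one must track carefully which error pattern flips Bob's $\mathbb{X}$-eigenstate (and hence labels the orthogonal blocks) versus which one contributes only the sign $(-1)^{\bm{x}\cdot(\,\cdot\,)}$ that plays the role of the character, and confirm that the leftover $\bm{x}$-independent phases can be gauged into the eigenbasis without disturbing orthonormality. A secondary point requiring care is that Theorem~\ref{Thrm:H_min} is stated for normalised states, so I would apply it to each block after rescaling by its weight $\sum_{\bm{i}}\lambda_{\bm{i},\bm{j}}$ and use the linearity of the guessing-probability contribution in that weight to recombine the blocks.
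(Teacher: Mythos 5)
Your proposal is correct and follows essentially the same route as the paper's proof: write the post-measurement state $\rho^n_{\bm{X}E}$ as a sum of mutually orthogonal sub-normalized blocks labeled by one error-pattern index, use additivity of the guessing probability across orthogonal blocks, and apply Theorem~\ref{Thrm:H_min} with $d=2^n$ to each block before recombining. The only cosmetic differences are that you justify the block additivity by a direct pinching (block-diagonalize the POVM) argument where the paper cites Appendix~A.3 of its duality reference, and that you are more explicit than the paper about the sub-normalization rescaling and about the fact that the $\mathbb{Z}_2^n$ characters $(-1)^{\bm{x}\cdot\bm{i}}$ replace the cyclic characters $\omega^{xy}$ of Theorem~\ref{Thrm:H_min}, whose proof indeed carries over verbatim.
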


\begin{proof}
Clearly, the CQ state $\rho^n_{\bm{X}E}$ is in the form
\begin{equation}
\begin{aligned}
\rho^n_{\bm{X}E}=\frac{1}{2^n} \sum_{\bm{x}} \ket{\bm{x}}\bra{\bm{x}} \otimes 
\sum_{\bm{j}}  P \left\{ \sum_{\bm{i}} (-1)^{\bm{i} \cdot \bm{x}} \sqrt{\lambda_{\bm{i}, \bm{j}}}  \ket{e_{\bm{i}, \bm{j}}}  \right\},
\end{aligned}
\end{equation}
where $\bm{x} \in \{0, 1\}^n$ is Alice's classical bit string. The sub-normalized state conditioned on $\bm{j}$ is
\begin{equation}
\rho^n_{\bm{X}E|\bm{j}}=\frac{1}{2^n} \sum_{\bm{x}} \ket{\bm{x}}\bra{\bm{x}} \otimes  P \left\{ \sum_{\bm{i}} (-1)^{\bm{i} \cdot  \bm{x}} \sqrt{\lambda_{\bm{i}, \bm{j}}}  \ket{e_{\bm{i}, \bm{j}}}  \right\},
\end{equation}
where $\rho^n_{\bm{X}E}=\sum_{\bm{j}}  \rho^n_{\bm{X}E|\bm{j}}$. Because of the mutually orthogonality of $\rho^n_{\bm{X}E|\bm{j}}$, Appendix A.3 in Ref. \cite{tomamichel2010duality} implies that
\begin{equation}
\label{hminrelation}
2^{-H_{\min}(\bm{X}|E)_{\rho^n_{\bm{X}E}}}=\sum_{\bm{j}} 2^{-H_{\min}(X|E)_{\rho^n_{\bm{X}E|\bm{j}}}}.
\end{equation}
 By applying Theorem \ref{Thrm:H_min} to $\rho^n_{\bm{X}E|\bm{j}}$ and setting $d=2^n$, we get
\begin{equation}
H_{\min}(\bm{X}|E)_{\rho^n_{\bm{X}E|\bm{j}}}=n-\log( \sum_{\bm{i}} \sqrt{\lambda_{\bm{i}, \bm{j}}})^2.
\end{equation}
Combining with Eq. \eqref{hminrelation}, we obtain Eq. \eqref{hminrho_xe}.
\end{proof}

We further simplify the form of $H_{\text{min}}(\bm{X}|E)_{\rho^n_{\bm{X}E}}$ using the following corollary.

\begin{corollary} 
\label{Cor:H_min_XE}
Let $\lambda_{\bm{i}}:=\sum_{\bm{j}} \lambda_{\bm{i}, \bm{j}}$, then
\begin{equation}
H_{\min}(\bm{X}|E)_{\rho^n_{\bm{X}E}} \ge n-\log (\sum_{\bm{i}} \sqrt{\lambda_{\bm{i}}})^2.
\end{equation}
\end{corollary}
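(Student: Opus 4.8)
The plan is to start from the exact expression for the min-entropy supplied by Theorem~\ref{Thrm:Psi_ABE}, namely
\begin{equation}
H_{\min}(\bm{X}|E)_{\rho^n_{\bm{X}E}}=n-\log \left[\sum_{\bm{j}} \left( \sum_{\bm{i}} \sqrt{\lambda_{\bm{i}, \bm{j}}} \right)^2 \right],
\end{equation}
and rewrite the claimed bound as an inequality purely between the two bracketed quantities. Since $\log$ is strictly increasing, the corollary is equivalent to the elementary inequality
\begin{equation}
\sum_{\bm{j}} \left( \sum_{\bm{i}} \sqrt{\lambda_{\bm{i}, \bm{j}}} \right)^2 \le \left( \sum_{\bm{i}} \sqrt{\lambda_{\bm{i}}} \right)^2 = \left( \sum_{\bm{i}} \sqrt{\textstyle\sum_{\bm{j}} \lambda_{\bm{i}, \bm{j}}} \right)^2,
\end{equation}
where the last equality just substitutes the definition $\lambda_{\bm{i}}=\sum_{\bm{j}}\lambda_{\bm{i},\bm{j}}$. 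So the whole content of the corollary reduces to proving this scalar inequality.

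The key step is to recognize this as the triangle (Minkowski) inequality in disguise. For each fixed $\bm{i}$ I would introduce the nonnegative vector $v_{\bm{i}}$ whose components, indexed by $\bm{j}$, are $(v_{\bm{i}})_{\bm{j}} := \sqrt{\lambda_{\bm{i}, \bm{j}}}$. Its Euclidean norm is then $\lVert v_{\bm{i}} \rVert = \sqrt{\sum_{\bm{j}} \lambda_{\bm{i}, \bm{j}}} = \sqrt{\lambda_{\bm{i}}}$, while the vector sum $\sum_{\bm{i}} v_{\bm{i}}$ has $\bm{j}$-th component $\sum_{\bm{i}} \sqrt{\lambda_{\bm{i}, \bm{j}}}$, so that $\bigl\lVert \sum_{\bm{i}} v_{\bm{i}} \bigr\rVert^2 = \sum_{\bm{j}} ( \sum_{\bm{i}} \sqrt{\lambda_{\bm{i}, \bm{j}}} )^2$ is exactly the left-hand side above. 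The desired inequality is then literally the squared triangle inequality
\begin{equation}
\Bigl\lVert \sum_{\bm{i}} v_{\bm{i}} \Bigr\rVert^2 \le \Bigl( \sum_{\bm{i}} \lVert v_{\bm{i}} \rVert \Bigr)^2,
\end{equation}
which holds because both sides are nonnegative and $\lVert \sum_{\bm{i}} v_{\bm{i}} \rVert \le \sum_{\bm{i}} \lVert v_{\bm{i}} \rVert$.

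There is no real obstacle here; once the correct vector interpretation is spotted, the argument is immediate. The only point warranting a little care is to confirm that the reduction from the min-entropy statement to the scalar inequality is in the right direction: a smaller value of the bracketed sum yields a larger value of $-\log(\cdot)$ and hence a larger min-entropy, so upper-bounding $\sum_{\bm{j}} ( \sum_{\bm{i}} \sqrt{\lambda_{\bm{i}, \bm{j}}} )^2$ correctly produces the lower bound on $H_{\min}(\bm{X}|E)_{\rho^n_{\bm{X}E}}$ asserted by the corollary. Interpretively, this shows that Eve can only do better (from her guessing standpoint) when she retains the extra label $\bm{j}$, so coarse-graining over $\bm{j}$ via $\lambda_{\bm{i}}$ can only increase the certified randomness.
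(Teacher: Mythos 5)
Your proposal is correct and follows essentially the same route as the paper: both start from the exact expression $H_{\min}(\bm{X}|E)_{\rho^n_{\bm{X}E}}=n-\log \bigl[\sum_{\bm{j}} ( \sum_{\bm{i}} \sqrt{\lambda_{\bm{i}, \bm{j}}} )^2 \bigr]$ and reduce the corollary to the scalar inequality $\sum_{\bm{j}} ( \sum_{\bm{i}} \sqrt{\lambda_{\bm{i}, \bm{j}}} )^2 \le ( \sum_{\bm{i}} \sqrt{\lambda_{\bm{i}}} )^2$. The only difference is in how that inequality is dispatched: the paper expands both squares, cancels the diagonal terms, and applies the Cauchy--Schwarz inequality to each cross term $\sum_{\bm{j}} \sqrt{\lambda_{\bm{i},\bm{j}}\lambda_{\bm{i}',\bm{j}}} \le \sqrt{\lambda_{\bm{i}}\lambda_{\bm{i}'}}$, whereas you invoke the Minkowski (triangle) inequality for the Euclidean norm --- which is the same computation packaged as a named inequality.
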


\begin{proof}
From Eq. \eqref{hminrelation}, it suffices to prove 
\begin{equation}
\sum_{\bm{j}} \left( \sum_{\bm{i}} \sqrt{\lambda_{\bm{i}, \bm{j}}} \right)^2 \le \left(\sum_{\bm{i}} \sqrt{\lambda_{\bm{i}}} \right)^2,
\end{equation}
or equivalently 
\begin{equation}
\sum_{\bm{j}} \left( \sum_{\bm{i}}\lambda_{\bm{i}, \bm{j}}  + 2 \sum_{\bm{i} \ne \bm{i}^{'}} \sqrt{\lambda_{\bm{i}, \bm{j}}  \lambda_{\bm{i}^{'}, \bm{j}}  } \right) \le  \sum_{\bm{i}}\lambda_{\bm{i}}  + 2 \sum_{\bm{i} \ne \bm{i}^{'}} \sqrt{\lambda_{\bm{i}}  \lambda_{\bm{i}^{'}}  }.
\end{equation}
Therefore, it is sufficient to show that
\begin{equation}
 \sum_{\bm{j}} \sqrt{\lambda_{\bm{i}, \bm{j}}  \lambda_{\bm{i}^{'}\!, \bm{j}}  } \le \sqrt{\lambda_{\bm{i}}  \lambda_{\bm{i}^{'}}},
\end{equation}
for any pair of $\bm{i} \ne \bm{i}^{'}$. And this inequality follows directly from the Cauchy-Schwarz inequality.
\end{proof}

In the following lemma, we connect $\{\lambda_{\bm{i}}\}$ with $\hat{e}_z$, and thus express the smoothed version of $H_{\text{min}}(\bm{X}|E)_{\rho^n_{XE}}$ in terms of $\hat{e}_z$.

\begin{lemma} 
\label{Lem:smooth_H_min}
Let $\varepsilon \ge 0$. Define the set
\begin{equation}
\mathcal{S}_{\hat{e}_z}:=\left\{\bm{i}: \frac{\sum_k \bm{i}_k }{n} \le \hat{e}_z \right\} .
\end{equation}
Suppose ${\sum_{\bm{i} \in \mathcal{S}_{\hat{e}_z} } \lambda_{\bm{i}}}=1-\varepsilon$. Then
\begin{equation}
\label{H_min^varepsilon}
H_{\min}^{\varepsilon}(\bm{X}|E)_{\rho^n_{\bm{X}E}} \ge n[1-h(\hat{e}_z)],
\end{equation}
where $h(x):=-x\log x -(1-x)\log(1-x)$ is the binary entropy function.
\end{lemma}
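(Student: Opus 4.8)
The plan is to exhibit an explicit (sub-normalized) state $\bar{\rho}^n_{\bm{X}E}\in B_{\varepsilon}(\rho^n_{\bm{X}E})$ whose \emph{unsmoothed} min-entropy already meets the target bound, and then invoke the definition of $H_{\min}^{\varepsilon}$ as the supremum of $H_{\min}$ over the $\varepsilon$-ball. The natural candidate is the \emph{truncation} of $\rho^n_{\bm{X}E}$ to the low-weight sector: project Eve's system onto $\mathrm{span}\{\ket{e_{\bm{i},\bm{j}}}:\bm{i}\in\mathcal{S}_{\hat{e}_z}\}$, which amounts to deleting from each $\bm{j}$-block every amplitude $\sqrt{\lambda_{\bm{i},\bm{j}}}$ with $\bm{i}\notin\mathcal{S}_{\hat{e}_z}$.

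First I would lower-bound $H_{\min}(\bm{X}|E)_{\bar{\rho}^n_{\bm{X}E}}$. The truncated state retains the mutually orthogonal $\bm{j}$-block structure, so Eq.~\eqref{hminrelation} still applies, and block by block Theorem~\ref{Thrm:H_min} (used in its sub-normalized form, exactly as in the proof of Theorem~\ref{Thrm:Psi_ABE}) gives $2^{-H_{\min}(X|E)_{\bar{\rho}_{\bm{j}}}}=2^{-n}(\sum_{\bm{i}\in\mathcal{S}_{\hat{e}_z}}\sqrt{\lambda_{\bm{i},\bm{j}}})^2$. Summing over $\bm{j}$ and reusing the Cauchy–Schwarz step of Corollary~\ref{Cor:H_min_XE} (which depends only on $\lambda_{\bm{i}}=\sum_{\bm{j}}\lambda_{\bm{i},\bm{j}}$ and is insensitive to restricting $\bm{i}$ to $\mathcal{S}_{\hat{e}_z}$) yields $H_{\min}(\bm{X}|E)_{\bar{\rho}^n_{\bm{X}E}}\ge n-\log(\sum_{\bm{i}\in\mathcal{S}_{\hat{e}_z}}\sqrt{\lambda_{\bm{i}}})^2$.

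A second Cauchy–Schwarz then gives $(\sum_{\bm{i}\in\mathcal{S}_{\hat{e}_z}}\sqrt{\lambda_{\bm{i}}})^2\le|\mathcal{S}_{\hat{e}_z}|\sum_{\bm{i}\in\mathcal{S}_{\hat{e}_z}}\lambda_{\bm{i}}\le|\mathcal{S}_{\hat{e}_z}|$, and since $\mathcal{S}_{\hat{e}_z}$ is exactly the set of length-$n$ strings of Hamming weight at most $n\hat{e}_z$, the standard binomial estimate $\sum_{w\le n\hat{e}_z}\binom{n}{w}\le 2^{n h(\hat{e}_z)}$ (valid for $\hat{e}_z\le 1/2$, the only nontrivial regime) gives $\log|\mathcal{S}_{\hat{e}_z}|\le n\,h(\hat{e}_z)$. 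Combining, $H_{\min}(\bm{X}|E)_{\bar{\rho}^n_{\bm{X}E}}\ge n[1-h(\hat{e}_z)]$.

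It remains to verify $\bar{\rho}^n_{\bm{X}E}\in B_{\varepsilon}(\rho^n_{\bm{X}E})$, and this closeness estimate is where I expect the real work to lie. I would argue at the level of purifications: truncating the global pure state $\ket{\Psi}^n_{ABE}$ to $\bm{i}\in\mathcal{S}_{\hat{e}_z}$ produces $\ket{\bar{\Psi}}$ with $\braket{\Psi}{\bar{\Psi}}=\sum_{\bm{i}\in\mathcal{S}_{\hat{e}_z}}\lambda_{\bm{i}}=1-\varepsilon$, and since the channel ``measure in $\mathbb{X}$, discard $B$'' sends $\ket{\Psi}\mapsto\rho^n_{\bm{X}E}$ and $\ket{\bar{\Psi}}\mapsto\bar{\rho}^n_{\bm{X}E}$, data processing for the fidelity forces $F(\rho^n_{\bm{X}E},\bar{\rho}^n_{\bm{X}E})\ge\braket{\Psi}{\bar{\Psi}}$. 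The delicate point — and the main obstacle — is to translate the truncated weight $\varepsilon$ into the purified-distance radius demanded by the definition of $B_{\varepsilon}(\rho^n_{\bm{X}E})$; this is the only step where the precise smoothing convention and numerical constants must be tracked carefully, whereas the entropy and counting arguments above are routine consequences of Theorem~\ref{Thrm:H_min} and Corollary~\ref{Cor:H_min_XE}.
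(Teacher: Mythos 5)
You take essentially the same route as the paper's own proof: truncate to the low-Hamming-weight sector $\mathcal{S}_{\hat{e}_z}$, lower-bound the min-entropy of the truncated state via the $\bm{j}$-block structure (Eq.~\eqref{hminrelation} with Theorem~\ref{Thrm:H_min} applied blockwise, exactly as in Theorem~\ref{Thrm:Psi_ABE}) plus the Cauchy--Schwarz step of Corollary~\ref{Cor:H_min_XE}, bound the result by $\log|\mathcal{S}_{\hat{e}_z}|\le\log\sum_{w\le n\hat{e}_z}\binom{n}{w}\le nh(\hat{e}_z)$, and certify closeness through monotonicity of fidelity under the map ``measure $\mathbb{X}$, trace out $B$'' acting on purifications. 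The only real difference is that you keep the truncated state sub-normalized whereas the paper renormalizes it (its $\ket{\Psi_{\varepsilon}}^n_{ABE}$); this is immaterial since $B_{\varepsilon}$ admits sub-normalized states. Your entropy half is complete and correct, and slightly more self-contained than the paper's, which cites an external lemma of Ref.~\cite{tomamichel2012tight} for the counting estimate.

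The step you leave open --- converting the truncated weight into a purified-distance radius --- is the one the paper dispatches in two lines, by precisely the fidelity/data-processing computation you sketch; but your instinct that the constants are delicate is correct, because they do not in fact come out to $\varepsilon$. With the paper's normalized truncation, $F(\ket{\Psi_{\varepsilon}}^n_{ABE},\ket{\Psi}^n_{ABE})=\sqrt{1-\varepsilon}$, hence $P=\sqrt{1-F^2}=\sqrt{\varepsilon}$, not $\varepsilon$ as the paper asserts; with your sub-normalized truncation the generalized fidelity equals $1-\varepsilon$, hence $P=\sqrt{2\varepsilon-\varepsilon^2}$. Either way the smoothed state lies only in a $\sqrt{\varepsilon}$-ball (respectively $\sqrt{2\varepsilon}$-ball), so what both arguments literally establish is $H^{\sqrt{\varepsilon}}_{\min}(\bm{X}|E)_{\rho^n_{\bm{X}E}}\ge n[1-h(\hat{e}_z)]$. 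To recover the lemma exactly as stated one must either strengthen the hypothesis to $\sum_{\bm{i}\in\mathcal{S}_{\hat{e}_z}}\lambda_{\bm{i}}=1-\varepsilon^2$ or rescale the smoothing parameter (and propagate the change into Theorem~\ref{Thrm:keylength} and Lemma~\ref{Lem:QRNG_smooth_H_min}, which inherit the same issue). In short, the single gap in your proposal coincides with an arithmetic slip in the paper's own proof; modulo that shared point, your construction and the paper's are equivalent.
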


\begin{proof}
We claim that the state
\begin{equation}
\ket{\Psi_{\varepsilon}}^n_{ABE}=\frac{1} {{\sum_{\bm{i} \in \mathcal{S}_{\hat{e}_z} } \lambda_{\bm{i}}}}  \sum_{\bm{i} \in \mathcal{S}_{\hat{e}_z}}  \sum_{\bm{j}} \sqrt{\lambda_{\bm{i}, \bm{j}}} \ket{\Phi_{\bm{i}, \bm{j}}} \otimes \ket{e_{\bm{i}, \bm{j}}} .
\end{equation}
is $\varepsilon$-close to $\ket{\Psi}^n_{ABE}$ in the terms of the purified distance \cite{tomamichel2010duality}. To see this, we calculate the fidelity between the two states, namely,
\begin{equation}
F(\ket{\Psi_{\varepsilon}}^n_{ABE}, \ket{\Psi}^n_{ABE})=\left|\bra{\Psi_{\varepsilon}}^n_{ABE} \ket{\Psi}^n_{ABE}\right|=\sqrt{{\sum_{\bm{i} \in \mathcal{S}_{\hat{e}_z} } \lambda_{\bm{i}}}}=\sqrt{1-\varepsilon},
\end{equation}
so that the purified distance between the two states is then given by
\begin{equation}
P(\ket{\Psi_{\varepsilon}}^n_{ABE}, \ket{\Psi}^n_{ABE})=\sqrt{1-F^2(\ket{\Psi_{\varepsilon}}^n_{ABE}, \ket{\Psi}^n_{ABE})}=\varepsilon.
\end{equation}
Assuming that $\sigma^n_{\bm{X}E}$ is the state resulted from $\ket{\Psi_{\varepsilon}}^n_{ABE}$ after Alice measures all her qubits in $\mathbb{X}$ basis and then traces out Bob's system.  From the monotonicity of the fidelity, we have 
\begin{equation}
P(\rho^n_{\bm{X}E}, \sigma^n_{\bm{X}E}) \le \varepsilon.
\end{equation}
Therefore, using Corollary \ref{Cor:H_min_XE} and the definition of smooth min-entropy \cite{tomamichel2010duality}, we obtain 
\begin{equation}
H^{\varepsilon}_{\min}(\bm{X}|E)_{\rho^n_{\bm{X}E}} \ge H_{\min}(\bm{X}|E)_{\sigma^n_{\bm{X}E}} \ge
n-\log  \frac{(\sum_{\bm{i} \in \mathcal{S}_{\hat{e}_z} } \sqrt{\lambda_{\bm{i}}})^2 } {\sum_{\bm{i} \in \mathcal{S}_{\hat{e}_z} } \lambda_{\bm{i}}} .
\end{equation}
Since the cardinality of $\mathcal{S}_{\hat{e}_z}$ is at most 
$\displaystyle \sum_{\omega=0}^{\left \lfloor  n\hat{e}_z \right \rfloor} \binom{n}{\omega}$, using the technique of Lemma 3 in the Supplementary Information of Ref. \cite{tomamichel2012tight}, we arrive at
\begin{equation}
\log  \frac{(\sum_{\bm{i} \in \mathcal{S}_{\hat{e}_z} } \sqrt{\lambda_{\bm{i}}})^2 } {\sum_{\bm{i} \in \mathcal{S}_{\hat{e}_z} } \lambda_{\bm{i}}} \le \log \sum_{\omega=0}^{\left \lfloor  n\hat{e}_z \right \rfloor} \binom{n}{\omega} \le nh(\hat{e}_z).
\end{equation}
This completes the proof.
\end{proof}

Finally, by applying the quantum leftover hashing lemma \cite{renner2008security,tomamichel2011leftover}, we obtain a lower bound for the secret key length $\ell$ and prove that the protocol is $\varepsilon_\text{sec}$-secret with $\varepsilon_\text{sec}=4\varepsilon$.  Remarkably, this bound is the same as those reported in Ref. \cite{tomamichel2012tight}.

\begin{theorem} 
\label{Thrm:keylength}
If the final key length $\ell$ obeys
\begin{equation}
\label{key_length}
 \ell \le n[1-h(\hat{e}_z)]-{\normalfont\text{leak}_{\text{EC}}}- \log\frac{2}{\varepsilon_{\normalfont\text{sec}}^2\varepsilon_{\normalfont\text{cor}}}, 
\end{equation}
 where the classical information of the error correction leaked to Eve is at most ${\normalfont\text{leak}_{\text{EC}}}+\log_{2}(1/\varepsilon_{\normalfont\text{cor}})$, then this protocol is $\varepsilon_{\normalfont\text{sec}}$-secret. 
\end{theorem}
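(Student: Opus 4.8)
The plan is to assemble three ingredients already available to us: the smooth min-entropy bound of Lemma~\ref{Lem:smooth_H_min}, an accounting of the classical transcript that error correction reveals to Eve, and the quantum leftover hashing lemma, which converts conditional smooth min-entropy into a number of extractable near-uniform bits. I work in the composable framework, so the target is to bound the trace distance between the real key--Eve state and the ideal product of a uniform key with Eve's marginal; the hypothesis on $\text{leak}_{\text{EC}}$ and the correctness parameter $\varepsilon_{\text{cor}}$ enter only through the size of the reconciliation transcript.

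First I fix the state to which the hashing lemma is applied. After sifting and parameter estimation, the $n$ key-generation rounds measured in the $\mathbb{X}$ basis have reduced classical-quantum state $\rho^n_{\bm{X}E}$, and Lemma~\ref{Lem:smooth_H_min} already gives $H_{\min}^{\varepsilon}(\bm{X}|E)_{\rho^n_{\bm{X}E}} \ge n[1-h(\hat{e}_z)]$, where the single smoothing parameter $\varepsilon$ is used to absorb the failure probability of the Serfling estimate that certifies the virtual phase-error rate $\hat{e}_z$. Next I incorporate error correction: Eve additionally holds the reconciliation transcript $C$, a classical register whose logarithmic size is at most $\text{leak}_{\text{EC}}+\log\frac{1}{\varepsilon_{\text{cor}}}$ by hypothesis. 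Using the chain rule for smooth min-entropy under classical side information, conditioning on $C$ lowers the entropy by at most $\log|C|$, so that
\begin{equation}
H_{\min}^{\varepsilon}(\bm{X}|EC) \ge H_{\min}^{\varepsilon}(\bm{X}|E) - \text{leak}_{\text{EC}} - \log\frac{1}{\varepsilon_{\text{cor}}} \ge n[1-h(\hat{e}_z)] - \text{leak}_{\text{EC}} - \log\frac{1}{\varepsilon_{\text{cor}}} .
\end{equation}

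Then I apply the quantum leftover hashing lemma to privacy amplification. Hashing $\bm{X}$ with a two-universal family down to $\ell$ bits yields a key $K$ with
\begin{equation}
\frac{1}{2}\left\| \rho_{KEC} - \omega_K \otimes \rho_{EC} \right\|_1 \le 2\varepsilon + \frac{1}{2}\,2^{-\frac{1}{2}\left( H_{\min}^{\varepsilon}(\bm{X}|EC) - \ell \right)} ,
\end{equation}
where $\omega_K$ is the uniform key and the hash-function seed register is folded into Eve's side information. Substituting the length of Eq.~\eqref{key_length}, the $\text{leak}_{\text{EC}}$ and $\log\frac{1}{\varepsilon_{\text{cor}}}$ contributions cancel and the exponent becomes $H_{\min}^{\varepsilon}(\bm{X}|EC)-\ell \ge \log\frac{2}{\varepsilon_{\text{sec}}^2}$, so the hashing term is bounded by a fixed fraction of $\varepsilon_{\text{sec}}$. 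Setting $\varepsilon_{\text{sec}} = 4\varepsilon$ then makes $2\varepsilon = \varepsilon_{\text{sec}}/2$, and the two contributions sum to at most $\varepsilon_{\text{sec}}$, which is precisely the $\varepsilon_{\text{sec}}$-secrecy claim.

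I expect the main difficulty to be the parameter bookkeeping rather than any single deep inequality. In particular, one must (i) justify using one smoothing parameter $\varepsilon$ both for the phase-error estimate feeding Lemma~\ref{Lem:smooth_H_min} and for the smoothing demanded by the leftover hashing lemma; (ii) invoke the min-entropy chain rule in the correct direction, so that the penalty for the leaked transcript is exactly $\text{leak}_{\text{EC}}+\log\frac{1}{\varepsilon_{\text{cor}}}$ and not larger; and (iii) keep track of the constant inside $\log\frac{2}{\varepsilon_{\text{sec}}^2}$ so that the smoothing and privacy-amplification errors combine to land at or below $\varepsilon_{\text{sec}}$, reproducing the tight bound of Ref.~\cite{tomamichel2012tight}. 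Correctness is comparatively routine, since the reconciliation scheme is $\varepsilon_{\text{cor}}$-correct by construction, and composable security of the whole protocol then follows by adding the correctness and secrecy errors.
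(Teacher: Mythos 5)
Your proposal is correct and follows essentially the same route as the paper: Lemma~\ref{Lem:smooth_H_min} for the smooth min-entropy bound, the chain rule to subtract the error-correction leakage, the quantum leftover hashing lemma in the form $\Delta \le 2\varepsilon + \tfrac{1}{2}\sqrt{2^{\ell - H_{\min}^{\varepsilon}(\bm{X}|EC)}}$, and the choice $\varepsilon_{\text{sec}}=4\varepsilon$. The only difference is a constant in the bookkeeping — you charge the chain rule $\log\frac{1}{\varepsilon_{\text{cor}}}$ as in the theorem's hypothesis while the paper conservatively charges $\log\frac{2}{\varepsilon_{\text{cor}}}$ (absorbing the ceiling), so your hashing term comes out as $\varepsilon_{\text{sec}}/(2\sqrt{2})$ rather than exactly $\varepsilon_{\text{sec}}/2$, which still sums below $\varepsilon_{\text{sec}}$.
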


\begin{proof}
According to quantum leftover hashing lemma \cite{renner2008security,tomamichel2011leftover}, users can extract a $\Delta$-secret key of length $\ell$ from string $\bm{X}$ where
\begin{equation}
\Delta=2\varepsilon+\frac{1}{2}\sqrt{2^{\ell-H_{\text{min}}^{\varepsilon}(\bm{X}|E')}}.
\end{equation}
Here the term $E'$ represents all information Eve obtained, including the classical information of the error correction step and Eve's quantum side information. Since at most $\text{leak}_{\text{EC}}+\log_{2}(1/\varepsilon_\text{cor})$ bits leaked to Eve in the error correction, by the chain rule for smooth min-entropy \cite{vitanov2013chain} plus Eq. \eqref{H_min^varepsilon} in Lemma \ref{Lem:smooth_H_min}, we get
\begin{equation}
H_{\text{min}}^{\varepsilon}(\bm{X}|E^{'}) \ge H_{\text{min}}^{\varepsilon}(\bm{X}|E)_{\rho^n_{\bm{X}E}}-\text{leak}_{\text{EC}}- \log_{2}\frac{2}{\varepsilon_\text{cor}} \ge n[1-h(\hat{e}_z)]-\text{leak}_{\text{EC}}- \log_{2}\frac{2}{\varepsilon_\text{cor}}.
\end{equation}
By putting $\varepsilon_\text{sec}=4\varepsilon$, we obtain
\begin{equation}
\Delta \le 2\varepsilon+\frac{1}{2}\sqrt{2^{l-H_{\text{min}}^{\varepsilon}(\bm{X}|E')}} \le \frac{\varepsilon_\text{sec}}{2}+\frac{\varepsilon_\text{sec}}{2}=\varepsilon_\text{sec}.
\end{equation}
Thus, this protocol is $\varepsilon_\text{sec}$-secret.
\end{proof}

We remark that by universal composable security \cite{renner2008security,muller2009composability,portmann2022security}, this protocol is $\varepsilon_\text{tot}=(\varepsilon_\text{sec}+\varepsilon_\text{cor})$-secure. Comparing our key length in Eq. \eqref{key_length} with that of Ref. \cite{tomamichel2012tight}, the only difference is the estimation of $\hat{e}_z$, which depends on the actual concentration inequality used. If we use the same concentration inequality, our key length expression would be the same as theirs.  In conclusion, our one-shot smooth min-entropy bound calculation is powerful enough to reproduce the best provably secure key rate of the standard BB84 scheme in the finite-data setting.

\subsection{Application To Device-Independent Quantum Protocol With Uncharacterized Measurements}

Entropic uncertainty relations may be ineffective to show the security involving uncharacterized or imperfect measurements.  To further show the significance of our approach on handling this problem, we consider the security analysis of a QKD protocol regarding to uncharacterized measurements. We pick the protocol introduced in Ref. \cite{masanes2011secure} as the example. Concretely, it is a variant of Device-Independent (DI) QKD \cite{ekert1991quantum,acin2007device,pironio2009device} with assumption of independent measurements. This independence condition may be justifiable in several implementations and is necessarily satisfied when the raw key is generated by $N$ separate pairs of devices \cite{masanes2011secure}. In a more practical implementation, in which the raw key is generated by repeatedly performing measurements in sequence on a single pair of devices, this assumption means that the functioning of the devices do not depend on any internal memory storing the quantum states and measurement results obtained in previous rounds \cite{masanes2011secure}. In other words, it is not secure against memory attack \cite{barrett2013memory}. We write down the procedure of this variant DI QKD protocol in Procedure~\ref{procedure_DIQKD}.

\begin{table}
 \caption{\label{procedure_DIQKD}Device-Independent Quantum Key Distribution Protocol With Causally Independent Measurements}
 \begin{enumerate}
  \item \emph{State preparation and distribution:} Eve prepares an $N$-pair qubit state.  She sends half of each pair to Alice and the other half to Bob.
  \item \emph{Measurement:} For each qubit pair sent by Eve, both Alice and Bob randomly and separately choose either the key generation mode or the testing mode.  If the testing mode is selected, Alice (Bob) uniformly at random chooses $\kappa_a=0,1$ ($\kappa_b=0,1$) where $\kappa_a \in \{0, 1\}$ and $\kappa_b \in \{0, 1, 2\}$.  Whereas if the key generation mode is picked, Alice (Bob) sets $\kappa_a=0$ ($\kappa_b=2$). Alice (Bob) separately performs measurements on his (her) share of the qubit pair using operator $A_{\kappa_a}$ ($B_{\kappa_b}$) defined in and near Eq.~\eqref{E:meas_opers_def}.  Here $A_i$'s and $B_i$'s are complete projective measurements for $i = 0, 1$.  Moreover, $B_2$ can be chosen to be either $A_0$ or a general POVM.  (See the discussions near Eq.~\eqref{E:meas_opers_def} for detail.)  They jot down their measurement result as the bits $x$ and $y$, respectively.
  \item \emph{Sifting:} Alice (Bob) publicly announces her (his) mode. And they keep the mode-matched data. Denote the number of testing rounds by $4k$ with each combination $(\kappa_a, \kappa_b) \in \{00, 01, 10, 11\}$ by $k$. Further denote the number of key generation mode by $4n$. Clearly, $4n+4k=N$.  (By repeating the procedure sufficiently many times and by dropping at most 6 ``rounding events'', we may assume for simplicity that $n$ and $k$ are large positive integers.)
  \item \emph{Parameter estimation:} Alice and Bob calculate the winning probability of the Clauser–Horne–Shimony–Holt (CHSH) game using their measurement results from the testing mode.  That is to say, they compute the chance that $x \oplus y=\kappa_a \cdot \kappa_b$~\cite{clauser1969proposed}.  They abort the protocol if the winning frequency does not exceed a predefined threshold.
  \item \emph{Classical Post-processing:} For the remaining $4n$ bits of data from key generation mode, Alice and Bob execute an information reconciliation scheme that leaks at most $\text{leak}_{\text{EC}}+\left \lceil \log_{2}\frac{1}{\varepsilon_\text{cor}} \right \rceil$ bits if the protocol is $\varepsilon_\text{cor}$-correct.  Then, they apply a random two-universal hash function to the resultant error-corrected bits to extract $\ell$ bits of secret key. 
 \end{enumerate}
\end{table}

We begin our security analysis by applying our one-shot approach to compute the min-entropy of a single round of key generation mode. Using the reduction techniques in Lemmas~1 and~2 of Ref.~\cite{pironio2009device}, we could reduce any measurements to (complete) PMs in this protocol.  We denote the measurement operation corresponding to Alice (Bob) choosing $\kappa_a \in \{ 0,1\}$ ($\kappa_b \in \{ 0,1,2\}$) by $A_{\kappa_a}$ ($B_{\kappa_b}$).  Without lost of generality, we write
\begin{subequations}
 \label{E:meas_opers_def}
\begin{align}
A_0 &=\cos\alpha \sigma_z + \sin\alpha \sigma_x ,\\
A_1 &=\cos\alpha \sigma_x + \sin\alpha \sigma_z ,\\
B_0 &=\cos\beta \sigma_z + \sin\beta \sigma_x ,\\
B_1 &=\cos\beta \sigma_z - \sin\beta \sigma_x , 
\end{align}
\end{subequations}
for some fixed $\alpha, \beta \in \mathbb{R}$.  In these operators, we use the convention that their eigenvectors correspond to the rank one projects used in the complete projective measurement.  Furthermore, those measurement outcomes corresponding to eigenvalue $1$ ($-1$) are assigned the bit value $1$ ($0$).  The choice of $B_2$ requires an explanation.  From the reduction techniques in Lemmas~1 and~2 of Ref.~\cite{pironio2009device}, for unbounded and countable systems, it is possible to reduce two POVM measurements to two PMs.  However, it is not clear if it can be reduced to three PMs.  Hence, in general we may just leave it as a general POVM.  Nonetheless, we stress that the form of $B_2$ does not affect our subsequent security analysis.

Using the measurement operations defined in Eq.~\eqref{E:meas_opers_def} above, in the testing mode, the CHSH operator is given by 
\begin{equation}
\begin{aligned}
S:=&\frac{A_0 \otimes B_0 + A_0 \otimes B_1 + A_1 \otimes B_0 - A_1 \otimes B_1}{4} \\  
    =&  \frac{1}{2} \left(\cos\alpha \cos\beta \sigma_z \otimes \sigma_z + \sin\alpha\cos\beta \sigma_x \otimes \sigma_z + \cos\alpha\sin\beta \sigma_x \otimes \sigma_x + \sin\alpha\sin\beta \sigma_z \otimes \sigma_x \right).	    
\end{aligned}
\end{equation}
By standard procedure of single value decomposition, the formula of CHSH operator could be rewritten as 
\begin{align}
\label{CHSH}
 S=\frac{1}{2} \left(\sqrt{\Lambda_+} \tilde{\sigma}_z \otimes \tilde{\sigma}_z + \sqrt{\Lambda_-} \tilde{\sigma}_x \otimes \tilde{\sigma}_x \right),     
\end{align}
where $\Lambda_{\pm}=\frac{1}{2}(1 \pm \sqrt{1-\cos^2 2\alpha \sin^2 2\beta} ) \ge 0$ and $\tilde{\sigma}_i$'s are the Pauli operators in a suitable locally unitarily rotated reference frame.  Clearly, for any pair of qubit $\rho_{AB}$, its CHSH value is determined by $\left \langle \tilde{\sigma}_z \otimes \tilde{\sigma}_z \right \rangle$ and $\left \langle \tilde{\sigma}_x \otimes \tilde{\sigma}_x \right \rangle$. Observe that Eve can perform Bell-state measurement before distribution, which would neither change $\left \langle \tilde{\sigma}_z \otimes \tilde{\sigma}_z \right \rangle$ nor $\left \langle \tilde{\sigma}_x \otimes \tilde{\sigma}_x \right \rangle$, nor decrease her information knowledge. Thus, we write the distributed $\rho_{AB}$ as Bell-diagonal one, given by
\begin{align}
\label{rho_AB}
\rho_{AB}= \sum_{i,j}\lambda_{i,j} \ket{\Phi_{i,j} } \bra{\Phi_{i,j} },     
\end{align}
where $\ket{\Phi_{i,j} }=\sigma_{i,j}(\ket{00}+\ket{11})/\sqrt{2}$, and $\{\lambda_{i,j}\}$ are non-negative coefficients satisfying $\sum_{i,j}\lambda_{i,j}=1$. Note that $\{\sigma_{i,j}\}$ are defined as identity matrix and three Pauli matrices as ones before Eq. \eqref{E:U_ij}, and act on the second qubit. By applying our method, we obtain the lower bound of min-entropy in terms of $\{\lambda_{i,j}\}$, as the following lemma shows.

\begin{lemma}
\label{Lem:min_entropy_rho_AB}
Let Eve holds the purification of $\rho_{AB}$ in Eq. \eqref{rho_AB}, then the composite state shared by Alice, Bob and Eve is given by
\begin{equation}
\ket{\Psi}_{ABE}=\sum_{i,j} \sqrt{\lambda_{i,j}} \ket{\Phi_{i,j}} \otimes \ket{e_{i,j}}.
\end{equation}
Moreover, after performing $A_0$ and obtaining the classical bit $x$, the lower bound of min-entropy is given by 
\begin{equation}
H_{\min}(X|E)_{\rho_{XE}} \ge 1-\log\left(\sqrt{\lambda_{0,0}+\lambda_{1,1}}+\sqrt{\lambda_{1,0}+\lambda_{0,1}} \right)^2,
\end{equation}
where $\rho_{XE}$ is the corresponding CQ state.
\end{lemma}

\begin{proof}
Let $\{\cos\theta\ket{0}+\sin\theta\ket{1}, \sin\theta\ket{0}-\cos\theta\ket{1} \}$ be the eigenvectors of operator $A_0$, then the CQ state after performing $A_0$ and tracing Bob's system is given by 
\begin{align}
 \rho_{XE}={}&\frac{1}{2}\ket{0}\bra{0} \otimes \left[ P\{\cos\theta \frac{\sqrt{\lambda_{0,0}}\ket{e_{0,0}}+\sqrt{\lambda_{1,0}}\ket{e_{1,0}}}{\sqrt{2}} + \sin\theta \frac{\sqrt{\lambda_{0,1}}\ket{e_{0,1}}+\sqrt{\lambda_{1,1}}\ket{e_{1,1}}}{\sqrt{2}}\} \right. \notag \\
& \qquad \left. +P\{\sin\theta \frac{\sqrt{\lambda_{0,0}}\ket{e_{0,0}}-\sqrt{\lambda_{1,0}}\ket{e_{1,0}}}{\sqrt{2}} + \cos\theta \frac{\sqrt{\lambda_{0,1}}\ket{e_{0,1}}-\sqrt{\lambda_{1,1}}\ket{e_{1,1}}}{\sqrt{2}}\}\right] \notag \\
& +\frac{1}{2}\ket{1}\bra{1} \otimes \left[ P\{\cos\theta \frac{\sqrt{\lambda_{0,0}}\ket{e_{0,0}}-\sqrt{\lambda_{1,0}}\ket{e_{1,0}}}{\sqrt{2}} - \sin\theta \frac{\sqrt{\lambda_{0,1}}\ket{e_{0,1}}-\sqrt{\lambda_{1,1}}\ket{e_{1,1}}}{\sqrt{2}}\} \right. \notag \\
& \qquad \left. +P\{\sin\theta \frac{\sqrt{\lambda_{0,0}}\ket{e_{0,0}}+\sqrt{\lambda_{1,0}}\ket{e_{1,0}}}{\sqrt{2}} - \cos\theta \frac{\sqrt{\lambda_{0,1}}\ket{e_{0,1}}+\sqrt{\lambda_{1,1}}\ket{e_{1,1}}}{\sqrt{2}}\}\right].
\end{align}
Without decreasing Eve's power, we assume that Eve holds another ancillary system $F$ with basis $\{\ket{f_0}, \ket{f_1}\}$. Then the corresponding state equals
\begin{align}
 \rho_{XEF}={}&\frac{1}{2}\ket{0}\bra{0} \otimes \left[ P\{\frac{\cos\theta\sqrt{\lambda_{0,0}}\ket{e_{0,0}}+\sin\theta\sqrt{\lambda_{1,1}}\ket{e_{1,1}} + \cos\theta\sqrt{\lambda_{1,0}}\ket{e_{1,0}} +\sin\theta \sqrt{\lambda_{0,1}}\ket{e_{0,1}} } {\sqrt{2}} \otimes \ket{f_0} \} \right. \notag \\
& \qquad \left. +P\{\frac{\sin\theta\sqrt{\lambda_{0,0}}\ket{e_{0,0}} - \cos\theta\sqrt{\lambda_{1,1}}\ket{e_{1,1}} - \sin\theta\sqrt{\lambda_{1,0}}\ket{e_{1,0}} + \cos\theta \sqrt{\lambda_{0,1}}\ket{e_{0,1}} } {\sqrt{2}} \otimes \ket{f_1}\} \right] \notag \\
&+ \frac{1}{2}\ket{1}\bra{1} \otimes \left[ P\{\frac{\cos\theta\sqrt{\lambda_{0,0}}\ket{e_{0,0}}+\sin\theta\sqrt{\lambda_{1,1}}\ket{e_{1,1}} - \cos\theta\sqrt{\lambda_{1,0}}\ket{e_{1,0}} - \sin\theta \sqrt{\lambda_{0,1}}\ket{e_{0,1}} } {\sqrt{2}} \otimes \ket{f_0}\} \right. \notag \\
& \qquad \left. +P\{\frac{\sin\theta\sqrt{\lambda_{0,0}}\ket{e_{0,0}} - \cos\theta\sqrt{\lambda_{1,1}}\ket{e_{1,1}} + \sin\theta\sqrt{\lambda_{1,0}}\ket{e_{1,0}} - \cos\theta \sqrt{\lambda_{0,1}}\ket{e_{0,1}} } {\sqrt{2}} \otimes \ket{f_1}\} \right].
\end{align}
Thus, the sub-normalized state conditioned on $\ket{f_i}$ are respectively
 \begin{subequations}
\begin{align}
 \rho_{XEf_0}={}& \frac{1}{2}\ket{0}\bra{0} \otimes \left[ P\{\frac{\cos\theta\sqrt{\lambda_{0,0}}\ket{e_{0,0}}+\sin\theta\sqrt{\lambda_{1,1}}\ket{e_{1,1}} + \cos\theta\sqrt{\lambda_{1,0}}\ket{e_{1,0}} +\sin\theta \sqrt{\lambda_{0,1}}\ket{e_{0,1}} } {\sqrt{2}} \otimes \ket{f_0} \} \right] \notag \\
 &+\frac{1}{2}\ket{1}\bra{1} \otimes \left[  P\{\frac{\cos\theta\sqrt{\lambda_{0,0}}\ket{e_{0,0}}+\sin\theta\sqrt{\lambda_{1,1}}\ket{e_{1,1}} - \cos\theta\sqrt{\lambda_{1,0}}\ket{e_{1,0}} - \sin\theta \sqrt{\lambda_{0,1}}\ket{e_{0,1}} } {\sqrt{2}} \otimes \ket{f_0}\}  \right],
\end{align}
and
\begin{align}
 \rho_{XEf_1}={}& \frac{1}{2}\ket{0}\bra{0} \otimes \left[ P\{\frac{\sin\theta\sqrt{\lambda_{0,0}}\ket{e_{0,0}}-\cos\theta\sqrt{\lambda_{1,1}}\ket{e_{1,1}} - \sin\theta\sqrt{\lambda_{1,0}}\ket{e_{1,0}} +\cos\theta \sqrt{\lambda_{0,1}}\ket{e_{0,1}} } {\sqrt{2}} \otimes \ket{f_1} \} \right] \notag \\
 &+\frac{1}{2}\ket{1}\bra{1} \otimes \left[  P\{\frac{\sin\theta\sqrt{\lambda_{0,0}}\ket{e_{0,0}}-\cos\theta\sqrt{\lambda_{1,1}}\ket{e_{1,1}} + \sin\theta\sqrt{\lambda_{1,0}}\ket{e_{1,0}} - \cos\theta \sqrt{\lambda_{0,1}}\ket{e_{0,1}} } {\sqrt{2}} \otimes \ket{f_1}\}  \right],
\end{align}
\end{subequations}
Applying Theorem~\ref{Thrm:H_min} to $\rho_{XEf_i}$'s gives
 \begin{subequations}
  \label{E:H_min_X_E_fi}
\begin{equation}
H_{\min}(X|Ef_0)_{\rho_{XEf_0}} = 1- \log\left(\sqrt{\lambda_{0,0}\cos^2\theta+\lambda_{1,1}\sin^2\theta}+\sqrt{\lambda_{1,0}\cos^2\theta+\lambda_{0,1}\sin^2\theta} \right)^2
\end{equation}
and
\begin{equation}
H_{\min}(X|Ef_1)_{\rho_{XEf_1}} = 1- \log\left(\sqrt{\lambda_{0,0}\sin^2\theta+\lambda_{1,1}\cos^2\theta}+\sqrt{\lambda_{1,0}\sin^2\theta+\lambda_{0,1}\cos^2\theta} \right)^2.
\end{equation}
 \end{subequations}
Using the same technique of Eq. \eqref{hminrelation} and Corollary \ref{Cor:H_min_XE}, we obtain the following lower bound of min-entropy of $\rho_{XEF}$, 
\begin{equation}
H_{\min}(X|EF)_{\rho_{XEF}} \ge 1-\log\left(\sqrt{\lambda_{0,0}+\lambda_{1,1}}+\sqrt{\lambda_{1,0}+\lambda_{0,1} } \right)^2 .
 \label{E:H_min_X_EF_bound}
\end{equation}
 Interestingly, this lower bound is \emph{independent} of $\theta$.  Alternatively, we may optimize $\theta$ in Eq.~\eqref{E:H_min_X_E_fi} to obtain a better lower bound for $H_{\min}(X|EF)_{\rho_{XEF}}$.  However, we do not pursue this path as the inequality in Eq.~\eqref{E:H_min_X_EF_bound} is already good enough for our subsequent analysis. Finally, from the data-processing inequality for min-entropy \cite{tomamichel2010duality}, we conclude that
\begin{equation}
\label{H_min_XEF}
H_{\min}(X|E)_{\rho_{XE}} \ge H_{\min}(X|EF)_{\rho_{XEF}}.
\end{equation}
This completes our proof.
\end{proof}

We now come to the generalization of min-entropy calculation for the multi-round case. Following the idea of analysis of single-round case, we reduce the $N$-pair of qubit state $\rho^N_{AB}$ to the generalized Bell-diagonal one. In this variant of DI QKD, we have no reason to assume that the operators used $\{A_0, A_1, B_0, B_1\}$ are the same in each round. In other words, the values of $\alpha$ and $\beta$ in defining these projectors may differ in each round.  Fortunately, in each round of the testing mode, Alice (Bob) has committed to a fixed choice of $\kappa_a \in \{0, 1\}$ ($\kappa_b \in \{0, 1\}$). Therefore, we can make use of the freedom to post-select the reference frame so that each of the operators $\{A_0, A_1, B_0, B_1\}$ in the rounds choosing $\kappa_a=0$ are the same. Here, post-selecting the reference frame refers to re-tagging the reference frame \emph{after completing all rounds}.  As a consequence, $\rho^N_{AB}$ can be viewed as a permutation invariant state \cite{renner2008security,renner2007symmetry} since applying any permutation operator on $\rho^N_{AB}$ would not change the observable statistics and would not decrease Eve's power. Now, let us consider a virtual case related to the actual protocol in which Alice and Bob choose the testing mode only, and the number of testing round is $4(n+k)$ with each combination $(\kappa_a, \kappa_b) \in \{00, 01, 10, 11\}$ by $n+k$. In this virtual protocol, their measurement is 
\begin{equation}
\label{C}
C^N:=(A_0 \otimes B_0)^{\otimes n+k} \otimes (A_0 \otimes B_1)^{\otimes n+k} \otimes (A_1 \otimes B_0)^{\otimes n+k} \otimes (-A_1 \otimes B_1)^{\otimes n+k},
\end{equation}
and is followed by a uniformly random permutation operation $\pi$ selected from $\mathcal{S}_N$, the set of permutations on $\{1, \cdots, N\}$.  Note that the eigenstates of $S$ are the four Bell-states $\{\ket{\Phi_{i,j}}\}$.  Hence, applying Bell-measurement to $\rho^N_{AB}$ would not change its observable statistics. In the asymptotic case that the number of rounds $N \to \infty$, the outcome distributions of applying $C^N$ and $S^{\otimes N}$ are identical. But when $N$ is finite, these two statistical distributions may differ.  Fortunately, we could bound the observable statistics of $C^N$ by bounding the one of $S^{\otimes N}$ as well as the statistical differences between the outcomes of $S^{\otimes N}$ and $C^N$ using Chernoff-Hoeffding inequality \cite{hoeffding1994probability}.

\begin{lemma}
\label{Lem:error_between_two_operators}
Let $\rho^N$ be a permutation invariant $N$-partite state with $N$ being an even number. Let $\{\mathcal{F}, \mathcal{F}_0, \mathcal{F}_1\}$ be measurement operators acting on a single system of $\rho^N$, with $\mathcal{F}=(\mathcal{F}_0+\mathcal{F}_1)/2$. Suppose that $\{\mathcal{F}, \mathcal{F}_0, \mathcal{F}_1\}$ have two-output classical bit $\{0,1\}$. Denote the output $N$-bit string after applying $\mathcal{F}_0^{\otimes N/2} \otimes \mathcal{F}_1^{\otimes N/2}$ to $\rho^N$ by $l$. Denote the frequency of $0$ by $l_0$. Let the frequency of $0$ after applying $\mathcal{F}^{\otimes N}$ to $\rho^N$ be $f_0$, then
\begin{equation}
\Pr \left[|f_0-l_0| \ge \mu \right] \le 2e^{-2\mu^2N}
\end{equation}
for all $\mu > 0$.
\end{lemma}

\begin{proof}
 Clearly, applying $\mathcal{F}$ to a single system can be viewed as applying $\mathcal{F}_0$ and $\mathcal{F}_1$ uniformly randomly. Let $\mathcal{X} \in \{0, 1\}$ be an \iid random variable with mean value $1/2$. If $\mathcal{X}=0 \ (1)$, $\mathcal{F}_0$ ($\mathcal{F}_1$) is applied to the system. Then, for any $\mu \ge 0$, Chernoff-Hoeffding inequality \cite{hoeffding1994probability} implies that
\begin{equation}
\Pr\left[|\mathbb{E}(\mathcal{X})-\frac{1}{2}| \ge \mu\right] \le 2e^{-2\mu^2N}
\end{equation} 
where $\mathbb{E}(\mathcal{X})=\sum_{i=1}^N \mathcal{X}_i/N$. For any output bit-string $l$, half of bits are resulted from $\mathcal{F}_0$ while the another half of bits are resulted from $\mathcal{F}_1$. Assuming that each bit of $l$ will flip if its corresponding measurement operator changes, then applying $\mathcal{F}^{\otimes N}$ will result in at most $N\mu$ bits changed with small failure probability of $2e^{-2\mu^2N}$. This completes the proof.
\end{proof}

\begin{corollary} 
\label{Cor:the_difference_between_S^N_and_C^N}
Let $\mathcal{C}^{N}$ and $\mathcal{S}^{\otimes N}$ be the corresponding measurement maps of $C^{N}$ and $S^{\otimes N}$, respectively. Given an output of $\mathcal{C}^{N}$ with CHSH game's winning frequency $\omega$ (the frequency of $0$), then the winning frequency of $\mathcal{S}^{\otimes N}$, denoted as $\omega_S$, satisfies
\begin{equation}
\label{omega_S-omega}
\Pr \left[|\omega_S-\omega| \ge \mu+\nu \right] \le 2e^{-2\mu^2N}+4e^{-\nu^2N}
\end{equation}
for any $\mu, \nu > 0$.
\end{corollary}

\begin{proof}
Let $H^N$ be an intermediate measurement operator, defined as
\begin{equation}
H^N:=\left(\frac{A_0+A_1}{2} \otimes B_0 \right)^{\otimes 2(n+k)} \otimes \left(\frac{A_0-A_1}{2} \otimes B_1\right)^{\otimes 2(n+k)}.
\end{equation}
By the fact that 
\begin{equation}
S=\frac{1}{2}\left(\frac{A_0 + A_1}{2} \otimes B_0 +\frac{ A_0 - A_1}{2} \otimes B_1\right),
\end{equation}
 and Lemma \ref{Lem:error_between_two_operators}, the error when respectively applying $H^N$ and $S^{\otimes N}$ on $\rho^N_{AB}$ is bounded by $\mu$ with small failure probability of $2e^{-2\mu^2N}$. Fixing $B_0$ on Bod's side and using Lemma \ref{Lem:error_between_two_operators} again, the error when respectively applying $A_0^{\otimes n+k} \otimes A_1^{\otimes n+k}$ and $(\frac{A_0 + A_1}{2})^{\otimes 2(n+k)}$ on $\rho^N_{AB}$ is bounded by $\nu$ with small failure probability of $2e^{-\nu^2N}$. Things are similar if fixing $B_1$ on Bob's side. Therefore, the total failure probability is bounded by
\begin{equation}
1-(1-2e^{-2\mu^2N})(1-2e^{-\nu^2N})(1-2e^{-\nu^2N}) \le 2e^{-2\mu^2N}+4e^{-\nu^2N}.
\end{equation}
This completes the proof.
\end{proof}

Then, we come to the analysis of statistical fluctuation. Owing to Eq. \eqref{omega_S-omega}, we have 
\begin{equation}
\Pr \left[|\omega_S-\omega| \ge \frac{\mu}{2}+\frac{\mu}{2} \right] \le 2e^{-2\mu^2k}+4e^{-\mu^2k} < 6 e^{-\mu^2k}
\end{equation}
for all $\mu, k > 0$.
Let $\varepsilon^2_t=6 e^{-\mu^2k}$, then $\mu=\sqrt{\frac{2}{k}\ln \frac{\sqrt{6}}{\varepsilon_t}}$. Thus, the lower bound $\omega_S$ for testing mode can be written as a function of parameters $\omega$, $k$ and the small failure probability $\varepsilon^2_t$, namely,
\begin{equation}
\omega_S=\omega-\sqrt{\frac{2}{k}\ln \frac{\sqrt{6}}{\varepsilon_t}}.
\end{equation}
Picking Serfling inequality \cite{tomamichel2012tight}, the lower bound $\hat{\omega}_S$ for key generation mode can be written as a function of parameters $\omega$, $n$, $k$ and the small failure probability $\varepsilon^2_g$, namely,
\begin{equation}
\hat{\omega}_S=\omega_S-\sqrt{\frac{(n+k)(4k+1)}{16nk^2}\ln \frac{1}{\varepsilon_g}}.
 \label{E:omega}
\end{equation}

Up to now, we have reduced $\rho^N_{AB}$ to the generalized Bell-diagonal state, at the cost of lower winning frequency. The state regarding to key generation round, denoted as $\rho^{4n}_{AB}$, is also Bell-diagonal. Similar to the case of single round, we write $\rho^{4n}_{AB}$ as
\begin{equation}
\label{rhoab4n}
\rho^{4n}_{AB}:=\sum_{\bm{i}, \bm{j}} \lambda_{\bm{i}, \bm{j}} \ket{\Phi_{\bm{i}, \bm{j}} } \bra{\Phi_{\bm{i}, \bm{j}} },
\end{equation}
which differs from $\rho^{n}_{AB}$ in Eq. \eqref{rhoabn} by the number of subsystem. In the following corollary, we use $\{\lambda_{\bm{i}, \bm{j}}\}$ to express the lower bound of min-entropy.

\begin{corollary}
\label{Cor:min_entropy_rho^4n_AB}
Let Eve holds the purification of $\rho^{4n}_{AB}$ in Eq. \eqref{rhoab4n}, then the composite state shared by Alice, Bob and Eve is given by
\begin{equation}
\ket{\Psi}^{4n}_{ABE}=\sum_{\bm{i},\bm{j}} \sqrt{\lambda_{\bm{i},\bm{j}}} \ket{\Phi_{\bm{i},\bm{j}}} \otimes \ket{e_{\bm{i},\bm{j}}}.
\end{equation}
After performing $A^{\otimes 4n}_0$ and obtaining classical bit string $\bm{x}$, the lower bound of min-entropy is given by 
\begin{equation}
H_{\min}(\bm{X}|E) \ge n-\log\left(\sqrt{\sum_{\bm{h}}\lambda_{\bm{h}}} \right)^2,
\end{equation}
where $\bm{h}:=\bm{i}+\bm{j}$ and $\bm{h} \in \{0,1\}^{4n}$.
\end{corollary}

\begin{proof}
 This can be proven using the same logical argument in the proof of
Lemma~\ref{Lem:min_entropy_rho_AB}.
\end{proof}

In the following lemma, we connect $\{\lambda_{\bm{h}}\}$ with $\omega$, and thus express the smoothed version of $H_{\min}(\bm{X}|E)$ in terms of $\omega$. 

\begin{lemma} 
\label{Lem:smooth_H_min_DI_QKD}
Let $\varepsilon=\varepsilon_t+\varepsilon_g \ge 0$. Define the set
\begin{equation}
\mathcal{S}_{\hat{\omega}_S}:=\left\{\bm{h} \colon \frac{\sum_k \bm{h}_k }{4n} \le \frac{1-\sqrt{16\hat{\omega}_S(\hat{\omega}_S-1)+3}}{2} \right\} .
\end{equation}
Suppose ${\sum_{\bm{h} \in \mathcal{S}_{\hat{\omega}_S} } \lambda_{\bm{h}}}=1-\varepsilon$. Then
\begin{equation}
\label{H_min^varepsilon_4n}
H_{\min}^{\varepsilon}(\bm{X}|E) \ge 4n \left[1-h(\frac{1- \sqrt{16\hat{\omega}_S(\hat{\omega}_S-1)+3}}{2}) \right].
\end{equation}
\end{lemma}

\begin{proof}
Recalling the CHSH operator in Eq. \eqref{CHSH}, we rewrite it as 
\begin{align}
 S ={}& \frac{1}{2} \left[\sqrt{\Lambda_+} \left(P\{\ket{\Phi_{0,0}}\}+P\{\ket{\Phi_{1,0}}\} - P\{\ket{\Phi_{0,1}}\} - P\{\ket{\Phi_{1,1}}\} \right) \right. \notag \\
  & \left. + \sqrt{\Lambda_-}  \left(P\{\ket{\Phi_{0,0}}\}-P\{\ket{\Phi_{1,0}}\} + P\{\ket{\Phi_{0,1}}\} - P\{\ket{\Phi_{1,1}}\} \right)\right].
\end{align}  
This CHSH operator corresponds to a Bell-measurement followed by re-tagging these measurement results. Concretely, denote the frequencies of outputs $\ket{\Phi_{0,0}}$, $\ket{\Phi_{1,0}}$, $\ket{\Phi_{0,1}}$ and $\ket{\Phi_{1,1}}$ by $f_{00}$, $f_{10}$, $f_{01}$ and $f_{11}$, respectively. Clearly, $f_{00}+f_{10}+f_{01}+f_{11}=1$, and the winning frequency is given by 
\begin{align}
\omega_S &= \frac{1}{4}  \left[2 + \sqrt{\Lambda_+} \left(f_{00} - f_{11} +  f_{10} - f_{01} \right) + \sqrt{\Lambda_-}\left( f_{00} - f_{11} - f_{10} + f_{01}\right)    \right] \notag \\
                 & \le \frac{1}{4} \left[2 + \sqrt{2 (f_{00} - f_{11})^2 + 2( f_{10} - f_{01})^2} \right],
\end{align}  
where we make use of Cauchy-Schwarz inequality and the fact that $\Lambda_++\Lambda_-=1$.
Thus, 
\begin{align}
\sqrt{16 \omega_S(\omega_S-1)+3} &= \sqrt{(4\omega_S-2)^2 - 1} \notag\\
& \le \sqrt{ 2 (f_{00} - f_{11})^2 + 2( f_{10} - f_{01})^2- 1} \notag\\
& \le \sqrt{ 2 (f_{00} + f_{11})^2 + 2( f_{10} + f_{01})^2- 1} \notag\\
& =1-2( f_{10} + f_{01}),
\end{align}  
where we make use of fact that $f_{00}+f_{10}+f_{01}+f_{11}=1$ in the last equality. Then, 
\begin{equation}
f_{10} + f_{01} \le \frac{1- \sqrt{16\omega_S(\omega_S-1)+3}}{2}. 
\end{equation}
Then we can prove this lemma following the same logical argument in the proof of Lemma \ref{Lem:smooth_H_min}.
\end{proof}

Finally, by applying the quantum leftover hashing lemma \cite{renner2008security,tomamichel2011leftover}, we obtain a lower bound for the secret key length $\ell$ and prove that the protocol is $\varepsilon_\text{sec}$-secret with $\varepsilon_\text{sec}=4\varepsilon$. 

\begin{theorem} 
\label{Thrm:DIQKDkeylength}
If the final key length $\ell$ obeys
\begin{equation}
\label{key_length2}
 \ell \le 4n \left[1-h(\frac{1- \sqrt{16\hat{\omega}_S(\hat{\omega}_S-1)+3}}{2}) \right]-{\normalfont\text{leak}_{\text{EC}}}- \log\frac{2}{\varepsilon_{\normalfont\text{sec}}^2\varepsilon_{\normalfont\text{cor}}}, 
\end{equation}
 where the classical information of the error correction leaked to Eve is at most ${\normalfont\text{leak}_{\text{EC}}}+\log_{2}(1/\varepsilon_{\normalfont\text{cor}})$, then this protocol is $\varepsilon_{\normalfont\text{sec}}$-secret. 
\end{theorem}

\begin{proof}
 This can be proven using the same logical argument in the proof of
Theorem~\ref{Thrm:keylength}.
\end{proof}

By incorporating our min-entropy calculation method with standard secret key rate computation techniques, we obtain a tight key rate which deviates from the asymptotic result only by terms that are caused by unavoidable statistical fluctuations in the parameter estimation step. In contrast, Ref.~\cite{masanes2011secure} only provides a loose bound if noisy channel taken into consideration whereas  Ref.~\cite{lim2013device} derives a key rate from entropic uncertain relations that is looser than ours.  Finally, we remark that the security proof in Ref.~\cite{lim2013device} could be adapted to our min-entropy approach even though their physical implementation is different from ours.

\subsection{Application To Continuous Variable Quantum Random Number Generation}

We now report a new SI-CV QRNG protocol and prove its security using our one-shot min-entropy calculation approach. The idea of this protocol is inspired by the number–phase uncertainty relation \cite{gerry2023introductory} of electromagnetic field.   Concretely, the randomness stems from the fact that the more certain the photon number is, the more uncertain the phase will be. Therefore, if we ensure that the standard deviation of the photon number of an incoming light is sufficiently small, then the phase of this light must be close to a uniform \iid distribution. In this way, we do not need to trust the incoming light as long as we could test both of phase and photon number. The problem of this approach is that it is not clear how to define a general quantum phase operator (see Sec. 2.7 in Ref. \cite{gerry2023introductory} for discussion).  Fortunately, we may substitute the quantum phase operator by heterodyne detection.  More precisely, the ``phase'' can be determined by the ratio of the two quadratures of an heterodyne detection.  As for photon number testing, the most direct way is to use a photon number resolving detector.  Nonetheless, this kind of detector is impractical due to low count rate and high cost. Here, we use a more common setup by using a threshold detector to test how close the incoming light is to the vacuum state.  We write down the procedure of our SI-CV QRNG protocol in Procedure~\ref{procedure_QRNG}.

\begin{table}
 \caption{\label{procedure_QRNG}Source-Independent Quantum Random Number Generation Protocol}
 \begin{enumerate}
  \item \emph{Sending untrusted states:} Eve prepares an $N$-partite optical quantum state and sends them to Alice one by one.
  \item \emph{Measurement:} For each photon send by Eve, Alice randomly chooses either the randomness generation mode or the testing mode.  Denote the number of photons used in the randomness generation mode and testing mode by $n$ and $k$, respectively. Clearly, $n+k=N$.  We fix $k$ so that $N \gg k$. If randomness generation mode chosen, Alice performs an heterodyne measurement on the received optical pulse to obtain the phase $\theta \in [0,2\pi)$ and amplitude $\mu \in [0,+\infty)$. If testing mode chosen, Alice performs a single photon measurement on the received optical pulse.  She records the frequency of detection $Q$, namely, the number of detection events divided by $k$.
  \item \emph{Parameter estimation:} Alice continues the protocol only if $Q$ is smaller the predefined threshold.
  \item \emph{Discretization:} Alice maps the continuous number $\theta \in [0, 2\pi)$ to a discrete number $x \in \{0, 1, 2, 3\}$. Specifically, $x=0$ when $\theta \in [0, \pi/2)$, $x=1$ when $\theta \in [\pi/2, \pi)$, $x=2$ when $\theta \in [\pi, 3\pi/2)$, and $x=3$ when $\theta \in [3\pi/2, 2\pi)$.  In this way, the sequence of continuously distributed $\theta$'s is mapped to the raw sequence of discrete random variables.
  \item \emph{Randomness extraction:} Alice applies a random two-universal hash function to the raw sequence of $x$ to extract final secret $\ell$-bit random numbers.
\end{enumerate}
\end{table}

Two remarks are in order.  First, we discretize the phase into four regions in the above protocol for illustrative purpose.  It is perfectly fine to sub-divide the phase into any equally spaced regions.  The security analysis is essentially unchanged.
Second, the energy test can be accomplished by heterodyne detection \cite{renner2009finetti,upadhyaya2021dimension,kanitschar2023finite}.  Thus, it seems possible to execute a similar protocol without threshold detector.  However, we do not pursue this investigation here as it is beyond the main goal of this Supplemental Material.

We begin our security analysis by applying our one-shot approach to compute the min-entropy of a single round of randomness generation mode.  An heterodyne measurement corresponds to the following POVM
\begin{equation}
\Pi_{\mu,\theta}:=\frac{1}{\pi}\ket{\sqrt{\mu}e^{i\theta}}\bra{\sqrt{\mu}e^{i\theta}},
\end{equation}
where $\mu \in [0, \infty)$ and $\theta \in [0, 2\pi)$. The outputs of an heterodyne measurement are two quadratures denoted by $\{q, p\}$.  They are related to $\mu$ and $\theta$ by $\mu=q^2+p^2$ and $\theta=\arctan (q / p)$. A threshold measurement corresponds to the POVM with two elements $\{ \ket{0}\bra{0},  \mathit{I}-\ket{0}\bra{0} \}$, where $\ket{0}\bra{0}$ and $\mathit{I}-\ket{0}\bra{0}$ correspond to ``no click event'' (that is, non-detection) and ``click event'' (that is, detection), respectively.  The untrusted light sent to Alice can be described by an arbitrary energy-bounded optical quantum state $\varrho_A$. However, we can reduce it to a diagonalized one in the Fock state basis. This is because Eve may apply a phase-randomized operation to $\varrho_A$ and record the corresponding phase information.
This phase-randomized operation would not change the detection frequency $Q$.  Therefore, we write $\varrho_A$ as
\begin{equation}
 \varrho_A = \sum_{m=0}^{\infty} p_m\ket{m}\bra{m}
\end{equation}
with constraints of normalized condition
\begin{subequations}
\begin{equation}
 \label{normalized_cond}
 \sum_{m=0}^{\infty} p_m=1
\end{equation}
 and energy-bounded condition
\begin{equation}
 \label{energy-bounded_cond}
  L:=\sum_{m=0}^{\infty} mp_m < \infty .
\end{equation}
\end{subequations}
Since Eve holds the purification of $\varrho_A$, the composite state can be written as $\ket{\Phi}_{AE}=\sum_{m=0}^{\infty} \sqrt{p_m}\ket{m}\ket{e_m}$, where $ \{\ket{e_m}\}$ are hold by Eve. For an implementation in practice, any heterodyne measurement is discretized. As stated in discretization step of our SI-CV QRNG protocol, we use $x$ instead of $\theta$ to generate the final random number sequence.  The corresponding POVM elements for $x$ are given by
\begin{equation}
\Pi_x:= \frac{1}{\pi} \int _{\frac{x\pi}{2}}^{\frac{(x+1)\pi}{2}} d\theta \int _{0}^{\infty} \sqrt{\mu} d\sqrt{\mu} \ \Pi_{\mu,\theta} =\frac{1}{2\pi} \int _{\frac{x\pi}{2}}^{\frac{(x+1)\pi}{2}} d\theta \int _{0}^{\infty} d\mu \ \Pi_{\mu,\theta}.
\end{equation}
Therefore, the CQ state after applying POVM $\{ \Pi_x \}$ to $\ket{\Phi}_{AE}$ equals
\begin{align}
\label{varrho_XE}
 \varrho_{XE} :={}& \sum_{x=0}^{3} \ket{x}\bra{x} \otimes \Tr(\Pi_x \ket{\Phi}_{AE}\bra{\Phi}) \nonumber \\
		      ={}&  \sum_{x=0}^{3} \ket{x}\bra{x} \otimes \frac{1}{2\pi} \int _{\frac{x\pi}{2}}^{\frac{(x+1)\pi}{2}} d\theta \int _{0}^{\infty} d\mu \ P\{ \bra{\sqrt{\mu}e^{i\theta}}\sum_{m=0}^{\infty} \sqrt{p_m}\ket{m}\ket{e_m} \} \nonumber \\
		     ={}&  \sum_{x=0}^{3} \ket{x}\bra{x} \otimes \frac{1}{2\pi} \int _{\frac{x\pi}{2}}^{\frac{(x+1)\pi}{2}} d\theta \int _{0}^{\infty} d\mu \ P\{ \sum_{m=0}^{\infty} \sqrt{\frac{e^{-\mu}\mu^mp_m}{m!}} \ e^{-im\theta}\ket{e_m} \},
\end{align}
where we make use of the fact that 
\begin{equation}
\ket{\sqrt{\mu}e^{i\theta}}=\sum_{m=0}^{\infty}\sqrt{\frac{e^{-\mu}\mu^m}{m!}}\ e^{im\theta}\ket{m}.
\end{equation}
With the replacement of variables, $\theta \to x\pi/2+\theta$ and $m \to 4m+y$ where $y \in \{0, 1, 2, 3\}$, we rewrite $\varrho_{XE}$ in Eq. \eqref{varrho_XE} as 
\begin{align}
 \varrho_{XE} ={}&\sum_{x=0}^{3} \ket{x}\bra{x} \otimes \frac{1}{2\pi} \int _{0}^{\frac{\pi}{2}} d\theta \int _{0}^{\infty} d\mu \ P\{ \sum_{y=0}^{3}\sum_{m=0}^{\infty} \sqrt{\frac{e^{-\mu}\mu^{4m+y}p_{4m+y}}{(4m+y)!}} \ e^{-i(4m+y)(\frac{x\pi}{2}+\theta)}\ket{e_{4m+y}} \} \nonumber \\
		     ={}&\frac{1}{4}\sum_{x=0}^{3} \ket{x}\bra{x} \otimes \frac{2}{\pi} \int _{0}^{\frac{\pi}{2}} d\theta \int _{0}^{\infty} d\mu \ P\{ \sum_{y=0}^{3}e^{-i\frac{\pi}{2}xy} \sum_{m=0}^{\infty} \sqrt{\frac{e^{-\mu}\mu^{4m+y}p_{4m+y}}{(4m+y)!}} \ e^{-i(4m+y)\theta}\ket{e_{4m+y}} \} \nonumber \\
		     ={}&\frac{2}{\pi} \int _{0}^{\frac{\pi}{2}} d\theta \int _{0}^{\infty} d\mu \ \frac{1}{4}\sum_{x=0}^{3} \ket{x}\bra{x} \otimes P\{ \sum_{y=0}^{3} e^{-i\frac{\pi}{2}xy} \ket{\bar{e}_y^{\theta,\mu}} \},
\end{align}
where we define the sub-normalized states $\{\ket{\bar{e}_y^{\theta,\mu}} \}$ as
\begin{equation}
\label{bar_e_y}
 \ket{\bar{e}_y^{\theta,\mu}}:=\sum_{m=0}^{\infty} \sqrt{\frac{e^{-\mu}\mu^{4m+y}p_{4m+y}}{(4m+y)!}} \ e^{-i(4m+y)\theta}\ket{e_{4m+y}}.
\end{equation}
Note that $\ket{\bar{e}_y^{\theta,\mu}}$ is a sub-normalized state owing to the energy-bounded condition in Eq.~\eqref{energy-bounded_cond}.
Let us define the ``joint probability density'' of variables $\{ \theta, \mu \}$ by
\begin{equation}
\label{varrho_XE_theta_mu}
 \varrho_{XE}^{\theta, \mu}:=\frac{1}{4}\sum_{x=0}^{3} \ket{x}\bra{x} \otimes P\{ \sum_{y=0}^{3}e^{-i\frac{\pi}{2}xy} \ket{\bar{e}_y^{\theta,\mu}} \},
\end{equation}
whose form matches that of Eq. \eqref{rho_XE}.  Clearly,
\begin{equation}
\label{int_varrho_XE}
\varrho_{XE} =\frac{2}{\pi} \int _{0}^{\frac{\pi}{2}} d\theta \int _{0}^{\infty} d\mu \ \varrho_{XE}^{\theta, \mu}.
\end{equation}

Now, we are ready to calculate the min-entropy lower bound of $\varrho_{XE}$ in Eq. \eqref{varrho_XE}. 

\begin{lemma}
\label{Lem:varrho_XE}
Let 
\begin{equation}
\label{varrho^theta_XE}
\varrho_{XE}^{\theta}:=\int _{0}^{\infty} d\mu \ \varrho_{XE}^{\theta, \mu},
\end{equation}
then
\begin{equation}
H_{\min}(X|E)_{\varrho_{XE}} \ge H_{\min}(X|E)_{\varrho^{\theta}_{XE}},
\end{equation}
for any $\theta \in [0, \pi/2)$.
\end{lemma}

\par \medskip

\begin{proof}
Recall the form of $\varrho_{XE}^{\theta, \mu}$ in Eq. \eqref{varrho_XE_theta_mu}, we will show that $\varrho_{XE}^{\theta, \mu}$ is equivalent to $\varrho_{XE}^{\theta', \mu}$ under unitary transformation for any $\theta, \theta' \in [0, \pi/2]$. Define the phase-shifting like operator as 
\begin{equation}
U(\delta)=\sum_{m=0}^{\infty} e^{-im\delta} \ket{e_m}\bra{e_m},
\end{equation}
where $\delta$ is the rotating angle. Take $\delta=\theta'-\theta$, then
\begin{equation}
\Big(\mathit{I}_X \otimes U(\delta)\Big) \varrho_{XE}^{\theta, \mu} \Big(\mathit{I}_X \otimes U(\delta)\Big)^{\dagger}=\varrho_{XE}^{\theta', \mu}.
\end{equation}
Hence, we obtain that $H_{\min}(X|E)_{\varrho_{XE}^{\theta, \mu}} = H_{\min}(X|E)_{\varrho_{XE}^{\theta', \mu}}$, and similarly 
\begin{equation}
\label{H_min_varrho_XE}
H_{\min}(X|E)_{\varrho_{XE}^{\theta}} = H_{\min}(X|E)_{\varrho_{XE}^{\theta'}}.
\end{equation} 
Next, we focus on the term of $\bra{e_m}\bra{x}\varrho_{XE}\ket{x}\ket{e_{m'}}$ in Eq. \eqref{varrho_XE}, which equals to 
\begin{equation}
\int _{\frac{x\pi}{2}}^{\frac{(x+1)\pi}{2}}  \frac{d\theta}{2\pi} \ e^{-i(m-m')\theta} \int _{0}^{\infty} d\mu \ e^{-\mu}\sqrt{\frac{\mu^m\mu^{m'} p_m p_{m'}}{m! m'!}}=\Delta_{mm'} \frac{e^{-i(m-m')(\frac{x\pi}{2}+\frac{\pi}{4})}}{4} \int _{0}^{\infty} d\mu \ e^{-\mu}\sqrt{\frac{\mu^m\mu^{m'} p_m p_{m'}}{m! m'!}},
\end{equation}
where 
\begin{equation}
 \Delta_{mm'}= \begin{cases}
  \frac{\sin \frac{(m-m')\pi}{4}}{\frac{(m-m')\pi}{4}} & \text{if~} m \ne m' , \\
  1 & \text{if~} m = m' .
 \end{cases}
\end{equation}
Note that $0 \le \Delta_{mm'} \ \textless \ 1$ if $m \neq m'$. Moreover, by setting $\theta=\pi/4$ and combining the definition of $\varrho_{XE}^{\theta}$ in Eq. \eqref{varrho^theta_XE}, we find 
\begin{equation}
\bra{e_m}\bra{x}\varrho_{XE}\ket{x}\ket{e_{m'}} = \Delta_{mm'} \bra{e_m}\bra{x}\varrho^{\pi/4}_{XE}\ket{x}\ket{e_{m'}}.
\end{equation}
In this view, we further define the phase flip like POVM within the two-dimensional sub-space spanned by $\{\ket{e_m}, \ket{e_{m'}} \}$, which is given by 
\begin{equation}
\mathcal{E}_{mm'} (\rho)= \sum_{\ell, \ell'=m, m'} \Delta_{\ell\ell'} \ket{e_\ell}\bra{e_\ell} \rho \ket{e_{\ell'}}\bra{e_{\ell'}},
\end{equation}
for any $\rho$. Composing all possible channels $\mathcal{E}_{mm'}$ with constraint $m < m'$, we obtain the new POVM, given by
\begin{equation}
\mathcal{E} = \bigcircle_{m < m'} \mathcal{E}_{mm'} \equiv \mathcal{E}_{01} \circ \mathcal{E}_{02} \circ \mathcal{E}_{03} \circ \cdots \circ \mathcal{E}_{12} \circ \mathcal{E}_{13} \circ \mathcal{E}_{14} \circ \cdots \circ \mathcal{E}_{23} \circ \mathcal{E}_{24} \circ \mathcal{E}_{25} \circ \cdots .
\end{equation}
(Note that channel in the above composition commutes with each other so the order of composition is immaterial.)
We immediately have
\begin{equation}
(\mathit{I}_X \otimes \mathcal{E}) (\varrho^{\pi/4}_{XE})=\varrho_{XE}.
\end{equation}
According to the data processing inequality for states within infinite but separable Hilbert spaces \cite{furrer2011min}, we arrive at 
\begin{equation}
H_{\min}(X|E)_{\varrho_{XE}} \ge H_{\min}(X|E)_{\varrho^{\pi/4}_{XE}}.
\end{equation}
By Eq. \eqref{H_min_varrho_XE}, we complete the proof.
\end{proof}

\begin{corollary} 
\label{Cor:lower_bound_Hmin_varrho_XE}
Let 
\begin{align}
\label{bar_varrho_XE_theta_mu}
 \bar{\varrho}_{XE}:=& \frac{1}{4}\sum_{x=0}^{3} \ket{x}\bra{x} \otimes P\{ \sum_{m=0}^{\infty} \sqrt{p_m} \ e^{-im\frac{x\pi}{2}}\ket{e_m} \} \\
                               =& \frac{1}{4}\sum_{x=0}^{3} \ket{x}\bra{x} \otimes P\{ \sum_{y=0}^{3}e^{-i\frac{\pi}{2}xy} \sum_{m=0}^{\infty} \sqrt{p_{4m+y}}  \ket{e_{4m+y}} \}     
\end{align}
then 
\begin{equation}
 \label{H_min_RNG}
H_{\min}(X|E)_{\varrho_{XE}} \ge H_{\min}(X|E)_{ \bar{\varrho}_{XE}},
\end{equation}
and thus
\begin{equation}
 \label{H_min_RNG}
H_{\min}(X|E)_{\varrho_{XE}} \ge 2- \log \left(\sum_{y=0}^{3}\sqrt{\sum_{m=0}^{\infty} p_{4m+y}} \right)^2.
\end{equation}
\end{corollary}

\begin{proof}
From Eqs.  \eqref{varrho_XE}, \eqref{bar_e_y}, \eqref{varrho_XE_theta_mu} and \eqref{varrho^theta_XE}, we obtain
\begin{equation}
\varrho_{XE}^{0}=  \frac{1}{4}\sum_{x=0}^{3} \ket{x}\bra{x} \otimes \int _{0}^{\infty} d\mu \ P\{ \sum_{m=0}^{\infty} \sqrt{\frac{e^{-\mu}\mu^mp_m}{m!}} \ e^{-im\frac{x\pi}{2}}\ket{e_m} \}.
\end{equation}
Thus,
\begin{equation}
\bra{e_m}\bra{x}\varrho^0_{XE}\ket{x}\ket{e_{m'}} =
\frac{e^{-i(m-m')\frac{x\pi}{2}}}{4} \sqrt{p_m p_{m'}} \int _{0}^{\infty} d\mu \ e^{-\mu}\sqrt{\frac{\mu^m\mu^{m'}}{m! m'!}}.
\end{equation}
Then, we define that the non-negative coefficients
\begin{equation}
\Omega_{mm'}:=\int _{0}^{\infty} d\mu \ e^{-\mu}\sqrt{\frac{\mu^m\mu^{m'}}{m! m'!}} \le \sqrt {\int _{0}^{\infty} d\mu \ e^{-\mu} \frac{\mu^m}{m!} \ \int _{0}^{\infty} d\mu \ e^{-\mu} \frac{\mu^{m'}} {m'!} } =1.
\end{equation}
Here we make use of the Cauchy–Schwarz inequality to obtain this inequality, and clearly, the equality holds if and only if $m=m'$. Hence, $\bra{e_m}\bra{x}\varrho^0_{XE}\ket{x}\ket{e_{m'}}=\Omega_{mm'}\bra{e_m}\bra{x}\bar{\varrho}_{XE}\ket{x}\ket{e_{m'}}$. Following the same logic of Lemma~\ref{Lem:varrho_XE}, there exists a POVM $\mathcal{F}$ such that 
\begin{equation}
(\mathit{I}_X \otimes \mathcal{F}) (\bar{\varrho}_{XE})=\varrho^0_{XE}.
\end{equation}
By using data processing inequality again and the result of Lemma~\ref{Lem:varrho_XE}, we have
\begin{equation}
H_{\min}(X|E)_{\varrho_{XE}} \ge H_{\min}(X|E)_{\varrho^0_{XE}} \ge H_{\min}(X|E)_{ \bar{\varrho}_{XE}}.
\end{equation}
Eq. \eqref{H_min_RNG} follows directly from Theorem~\ref{Thrm:H_min} and this completes our proof.
\end{proof}

We now come to the generalization of min-entropy calculation for the multi-round case. Following the idea of analysis of single-round case, we reduce the incoming $n$-partite state denoted by $\varrho_A^n$ to a diagonalized one in the Fock state basis so that $\varrho_A^n$ is written as 
\begin{equation}
\varrho_A^n=\sum_{\bm{m}} p_{\bm{m}} \ket{\bm{m}}\bra{\bm{m}},
\end{equation}
where $\bm{m} \in \mathbb{R}^n$, the sum is over all possible $\bm{m}$ and 
\begin{equation}
\ket{\bm{m}}:= \bigotimes_{k=1}^{n} \ket{m_k}, 
\end{equation}
$m_k$ is the $k$-th dit of string $\bm{m}$, and $\{\ket{m_k}\}$ are the Fock states for the $k$-th sub-system. Since Eve holds the purification of $\varrho_A^n$, the composite state is written by 
\begin{equation}
\label{Phi_n_AE}
\ket{\Phi}^n_{AE}=\sum_{\bm{m}} \sqrt{p_{\bm{m}}} \ket{\bm{m}}\ket{e_{\bm{m}}},
\end{equation}
where $\{ \ket{e_{\bm{m}}} \}$ are hold by Eve. Then, the CQ state, after applying POVM $\{ \Pi_x \}$ to each sub-system of $\ket{\Phi^n}_{AE}$, is given by
\begin{align}
 \varrho_{\bm{X}E} :={}& \sum_{\bm{x}} \ket{\bm{x}}\bra{\bm{x}} \otimes \frac{1}{(2\pi)^n} \int _{\frac{x_1\pi}{2}}^{\frac{(x_1+1)\pi}{2}} d\theta_1 \cdots \int _{\frac{x_n\pi}{2}}^{\frac{(x_n+1)\pi}{2}} d\theta_n  \int _{0}^{\infty} d\mu_1 \cdots  \int _{0}^{\infty} d\mu_n \ \times \nonumber  \\
& \qquad P\{ \sum_{\bm{m}}\sqrt{ p_{\bm{m}} \prod_{k=1}^n \frac{e^{-\mu_k}\mu^{m_k}}{m_k!}} e^{-i\bm{m} \cdot \bm{\theta}}\ket{e_{\bm{m}}} \},
\end{align}
where $\bm{x} \in \{0, 1, 2, 3\}^n$ and $\bm{m} \cdot \bm{\theta}=\sum_{k=1}^n m_k \theta_k$. 
We rename the variables $\theta_k \to x_k\pi/2+\theta_k$ and $m_k \to 4m_k+y_k$ where $x_k$, $y_k \in \{0, 1, 2, 3\}$ and $m_k$ are their $k$-th dits, respectively.  The result is
\begin{equation}
\label{Hmin_varrho_vectorXE}
\varrho_{\bm{X}E} = \left(\frac{2}{\pi}\right)^n \int _{0}^{\frac{\pi}{2}} d\theta_1 \cdots \int _{0}^{\frac{\pi}{2}} d\theta_n  \int _{0}^{\infty} d\mu_1 \cdots  \int _{0}^{\infty} d\mu_n \  \varrho^{\bm{\theta}, \bm{\mu}}_{\bm{X}E}
\end{equation}
where
\begin{equation}
 \varrho^{\bm{\theta}, \bm{\mu}}_{\bm{X}E} := \frac{1}{4^n} \sum_{\bm{x}} \ket{\bm{x}}\bra{\bm{x}} \otimes P\{ \sum_{\bm{y}} e^{-i\frac{\pi}{2} \bm{x} \cdot \bm{y}} \sum_{\bm{m}}\sqrt{ p_{\bm{m}} \prod_{k=1}^n \frac{e^{-\mu_k}\mu^{4m_k+y_k}}{(4m_k+y_k)!}} \ e^{-i(4\bm{m}+\bm{y}) \cdot \bm{\theta}} \ket{e_{4\bm{m}+\bm{y}}} \}.
\end{equation}
This form matches that of Eq. \eqref{rho_XE} if $d=4^n$.

\begin{corollary} 
\label{Cor:lower_bound_Hmin_varrho_vectorXE}
 The min-entropy of $\varrho_{\bm{X}E}$ in Eq. \eqref{Hmin_varrho_vectorXE} obeys
\begin{equation}
H_{\min}(\bm{X}|E)_{\varrho_{\bm{X}E}} \ge 2n- \log  \left(\sum_{\bm{y}}\sqrt{q_{\bm{y}}} \right)^2.
\end{equation}
where 
\begin{equation}
q_{\bm{y}}:=\sum_{\bm{m}} p_{4\bm{m}+\bm{y}}.
\end{equation}
\end{corollary}

\begin{proof}
 This can be proven using the same logical argument in the proof of 
 Corollary~\ref{Cor:lower_bound_Hmin_varrho_XE}.
\end{proof}

We now connect the parameters $\{ q_{\bm{y}} \}$ in Corollary~\ref{Cor:lower_bound_Hmin_varrho_vectorXE} to the observable $Q$ in this SI-CV QRNG protocol and express the smoothed version in terms of $Q$.

\begin{lemma} 
\label{Lem:QRNG_smooth_H_min}
Let $\varepsilon \ge 0$. Define the set
\begin{equation}
\mathcal{S}_{\hat{Q}}:=\left\{\bm{y}\colon \frac{\sum_k \bm{y}_k }{n} \le \hat{Q}\right\},
\end{equation}
where $\bm{y}_k=0$ if $y_k=0$, $\bm{y}_k=1$ if $y_k \ge 1$, and 
\begin{equation}
\label{upper_Q}
\hat{Q} \le Q+\sqrt{\frac{(n+k)(k+1)}{nk^2}\ln \frac{1}{\varepsilon}}.
\end{equation} 
Suppose ${\sum_{\bm{y} \in \mathcal{S}_{\hat{Q}} } q_{\bm{y}}}=1-\varepsilon$. Then
\begin{equation}
\label{H_min_varepsilon}
H_{\min}^{\varepsilon}(\bm{X}|E)_{\varrho_{\bm{X}E}} \ge n[2-H(\hat{Q})],
\end{equation}
where $H(\hat{Q}):=-\hat{Q}\log \hat{Q} -(1-\hat{Q})\log[(1-\hat{Q})/3]$ is the Shannon entropy of a random variable with four possible states following the probability distribution $\{\hat{Q}, (1-\hat{Q})/3, (1-\hat{Q})/3, (1-\hat{Q})/3\}$.
\end{lemma}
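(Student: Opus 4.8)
The plan is to mirror the proof of Lemma~\ref{Lem:smooth_H_min} from the BB84 analysis, replacing the binary structure there by the quaternary one appropriate to the four-region phase discretization. First I would exhibit an explicit sub-normalized smoothing of the purified source state $\ket{\Phi}^n_{AE}$ in Eq.~\eqref{Phi_n_AE}. Concretely, I would define
\begin{equation}
\ket{\Phi_\varepsilon}^n_{AE} := \frac{1}{\sqrt{\sum_{\bm y\in\mathcal{S}_{\hat Q}}q_{\bm y}}}\sum_{\bm m:\,\bm m\bmod 4\,\in\,\mathcal{S}_{\hat Q}}\sqrt{p_{\bm m}}\,\ket{\bm m}\ket{e_{\bm m}},
\end{equation}
that is, keep only those Fock vectors whose residue class $\bm y=\bm m\bmod 4$ lies in $\mathcal{S}_{\hat Q}$, then renormalize. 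Because the $\ket{e_{\bm m}}$ are orthonormal, the overlap with $\ket{\Phi}^n_{AE}$ equals $\sqrt{\sum_{\bm y\in\mathcal{S}_{\hat Q}}q_{\bm y}}=\sqrt{1-\varepsilon}$, so the purified distance between the two states is controlled by $\varepsilon$, exactly as in Lemma~\ref{Lem:smooth_H_min}.

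Next I would push this closeness through the measurement. Let $\sigma_{\bm X E}$ be the CQ state obtained from $\ket{\Phi_\varepsilon}^n_{AE}$ by applying the discretized heterodyne POVM $\{\Pi_{\bm x}\}$ and keeping Eve's system. Since this is a fixed quantum channel, monotonicity of the purified distance gives $P(\varrho_{\bm X E},\sigma_{\bm X E})\le\varepsilon$, hence $\sigma_{\bm X E}\in B_\varepsilon(\varrho_{\bm X E})$ and $H^\varepsilon_{\min}(\bm X|E)_{\varrho_{\bm X E}}\ge H_{\min}(\bm X|E)_{\sigma_{\bm X E}}$ by the definition of the smooth min-entropy. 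The state $\sigma_{\bm X E}$ has exactly the form analysed in Corollary~\ref{Cor:upper_bound_pguess_varrho_vectorXE}, but with the coefficients $q_{\bm y}$ restricted to $\bm y\in\mathcal{S}_{\hat Q}$ and rescaled by $1/(1-\varepsilon)$; applying that corollary yields
\begin{equation}
H_{\min}(\bm X|E)_{\sigma_{\bm X E}}\ge 2n-\log\frac{\left(\sum_{\bm y\in\mathcal{S}_{\hat Q}}\sqrt{q_{\bm y}}\right)^2}{\sum_{\bm y\in\mathcal{S}_{\hat Q}}q_{\bm y}}.
\end{equation}

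It then remains to control the surviving ratio by a counting argument, which I expect to be the crux. By the Cauchy--Schwarz inequality the ratio is at most the cardinality $|\mathcal{S}_{\hat Q}|$, so the whole problem reduces to bounding the number of residue vectors $\bm y\in\{0,1,2,3\}^n$ with at most $n\hat Q$ nonzero entries. Choosing the $w$ nonzero positions and then their values from $\{1,2,3\}$ gives
\begin{equation}
|\mathcal{S}_{\hat Q}|=\sum_{w=0}^{\lfloor n\hat Q\rfloor}\binom{n}{w}3^w.
\end{equation}
The new feature relative to the binary BB84 case is precisely the factor $3^w$ coming from the three nonzero phase symbols. In the near-vacuum regime $\hat Q\le\tfrac{1}{2}$ in which the protocol is run, I would bound $3^w\le 3^{n(1-\hat Q)}$ together with $\sum_{w\le n\hat Q}\binom{n}{w}\le 2^{nh(\hat Q)}$ (as in Lemma~3 of the Supplementary Information of Ref.~\cite{tomamichel2012tight}), so that
\begin{equation}
|\mathcal{S}_{\hat Q}|\le 2^{n[h(\hat Q)+(1-\hat Q)\log 3]}=2^{nH(\hat Q)},
\end{equation}
which matches the stated $H(\hat Q)=-\hat Q\log\hat Q-(1-\hat Q)\log[(1-\hat Q)/3]$. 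Combining the three displays gives $H^\varepsilon_{\min}(\bm X|E)_{\varrho_{\bm X E}}\ge 2n-\log|\mathcal{S}_{\hat Q}|\ge n[2-H(\hat Q)]$, as claimed. The main thing to get right is this cardinality estimate: one must keep track of the multiplicity $3^w$ of the nonzero symbols and verify that the replacement $3^w\le 3^{n(1-\hat Q)}$ is admissible in the operating regime, since a careless bound would either overcount or fail to reproduce the exact entropy function $H$.
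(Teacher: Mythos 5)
Your proof is correct and, for four of its five steps, is identical to the paper's: the same truncated-and-renormalized smoothing state $\ket{\Phi_\varepsilon}^n_{AE}$, the same fidelity computation giving $\sqrt{1-\varepsilon}$, the same monotonicity argument under the discretized heterodyne POVM, and the same invocation of Corollary~\ref{Cor:upper_bound_pguess_varrho_vectorXE} to reduce the claim to bounding $\log\bigl[\bigl(\sum_{\bm{y}\in\mathcal{S}_{\hat{Q}}}\sqrt{q_{\bm{y}}}\bigr)^2/\sum_{\bm{y}\in\mathcal{S}_{\hat{Q}}}q_{\bm{y}}\bigr]$. The one place you genuinely differ is this final entropic step: the paper disposes of it by citing the technique of Lemma~4 in Appendix~B of Ref.~\cite{wang2021tight}, whereas you make it self-contained via Cauchy--Schwarz (the ratio is at most the cardinality $|\mathcal{S}_{\hat{Q}}|$) and the explicit count $|\mathcal{S}_{\hat{Q}}|=\sum_{w=0}^{\lfloor n\hat{Q}\rfloor}\binom{n}{w}3^w$; this mirrors what the paper itself does in the BB84 case (Lemma~\ref{Lem:smooth_H_min}), and it is a valid, more elementary route that buys transparency about where the four-letter alphabet enters. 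One correction to your closing caveat, however: the ``careless'' bound you warn against, namely $3^w\le 3^{n\hat{Q}}$, does not fail --- it is the natural one (since $w\le n\hat{Q}$) and yields the \emph{stronger} estimate $|\mathcal{S}_{\hat{Q}}|\le 2^{n[h(\hat{Q})+\hat{Q}\log 3]}$, the exponent being the Shannon entropy of the distribution $\{1-\hat{Q},\hat{Q}/3,\hat{Q}/3,\hat{Q}/3\}$; since $h(\hat{Q})+\hat{Q}\log 3\le H(\hat{Q})$ for $\hat{Q}\le 1/2$, this sharper count also proves the lemma as stated. Your deliberate loosening $3^w\le 3^{n(1-\hat{Q})}$ is needed only to reproduce the lemma's particular function $H(\hat{Q})$ verbatim, and in doing so your derivation actually exposes that the lemma's bound is not tight: the stated distribution $\{\hat{Q},(1-\hat{Q})/3,(1-\hat{Q})/3,(1-\hat{Q})/3\}$ has the roles of $\hat{Q}$ and $1-\hat{Q}$ swapped relative to what the counting naturally produces, which matters practically since the protocol operates at $\hat{Q}\ll 1/2$ where the difference between $2-H(\hat{Q})\approx 2-\log 3$ and $2-h(\hat{Q})-\hat{Q}\log 3\approx 2$ bits per round is substantial.
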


\begin{proof}
The logical flow of this proof is the same as that of Lemma~\ref{Lem:smooth_H_min}.
We claim that the state
\begin{equation}
\ket{\Phi_{\varepsilon}}^n_{AE}=\frac{1} {\sqrt{{\sum_{\bm{y} \in \mathcal{S}_{\hat{Q}} } \sum_{\bm{m}} p_{4\bm{m}+\bm{y}}}}}  \sum_{\bm{y} \in \mathcal{S}_{\hat{Q}}} \sum_{\bm{m}} \sqrt{p_{4\bm{m}+\bm{y}}} \ket{4\bm{m}+\bm{y}} \otimes \ket{e_{4\bm{m}+\bm{y}}} .
\end{equation}
is $\varepsilon$-close to $\ket{\Phi}^n_{AE}$ in Eq. \eqref{Phi_n_AE}. To see this, we calculate the fidelity between the two state, that is,
\begin{equation}
F(\ket{\Phi_{\varepsilon}}^n_{AE}, \ket{\Phi}^n_{AE})=\left|\bra{\Phi_{\varepsilon}}^n_{AE} \ket{\Phi}^n_{AE}\right|=\sqrt{{\sum_{\bm{y} \in \mathcal{S}_{\hat{Q}} }q_{\bm{y}}}}=\sqrt{1-\varepsilon},
\end{equation}
so that the purified distance between the two states is then given by $\varepsilon$. 
Suppose $\varsigma_{\bm{X}E}$ is the state resulted from $\ket{\Phi_{\varepsilon}}^n_{AE}$ by Alice performing POVM $\{ \Pi_x \}$. The monotonicity of the fidelity implies that
\begin{equation}
P(\varrho_{\bm{X}E}, \varsigma_{\bm{X}E}) \le \varepsilon.
\end{equation}
Therefore, according to Corollary \ref{Cor:lower_bound_Hmin_varrho_vectorXE}, we get
\begin{equation}
H^{\varepsilon}_{\min}(\bm{X}|E)_{\varrho_{\bm{X}E}} \ge H_{\min}(\bm{X}|E)_{\varsigma_{\bm{X}E}} \ge
2n-\log  \frac{\left(\sum_{\bm{y} \in \mathcal{S}_{\hat{Q}}} \sqrt{  q_{\bm{y}}} \right)^2} {\sum_{\bm{y} \in \mathcal{S}_{\hat{Q}} } q_{\bm{y}}} .
\end{equation}
By the technique of Lemma 4 in Appendix B of Ref. \cite{wang2021tight},
we have
\begin{equation}
\log  \frac{\left(\sum_{\bm{y} \in \mathcal{S}_{\hat{Q}}} \sqrt{  q_{\bm{y}}} \right)^2} {\sum_{\bm{y} \in \mathcal{S}_{\hat{Q}} } q_{\bm{y}}} \le H(\hat{Q}).
\end{equation}
This completes the proof.
\end{proof}

Since the probability of finding a non-vacuum state in the testing rounds is $Q$, the chance of getting a Fock state whose number of particles does not equal $4m$ for some $m\in \mathbb{N}$ is at most $Q$. By the Serfling inequality, we see that the probability of such Fock state occurring is at most $\hat{Q}$ in the randomness generation rounds except a small failure probability $\varepsilon$.

Finally, we show the length of secret random numbers, and prove that this SI-CV QRNG protocol is $\varepsilon_\text{sec}$-secret with $\varepsilon_\text{sec}=4\varepsilon$

\begin{theorem} 
\label{Thrm:QRNG_keylength}
If the final random number length is given by 
\begin{equation}
\label{key_length_QRNG}
 \ell \le n[2-H(\hat{Q})]-\log\frac{1}{\varepsilon_{\normalfont \text{sec}}^2}, 
\end{equation}
 then this protocol is $\varepsilon_{\normalfont\text{sec}}$-secret. 
\end{theorem}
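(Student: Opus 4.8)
The plan is to follow the same route as the proof of Theorem~\ref{Thrm:keylength}, invoking the quantum leftover hashing lemma \cite{renner2008security,tomamichel2011leftover} together with the smooth min-entropy bound already established in Lemma~\ref{Lem:QRNG_smooth_H_min}. The crucial structural simplification relative to the BB84 case is that this QRNG protocol has no error-correction (reconciliation) step, so Eve's side information is purely her quantum system $E$: there is no classical leakage $\text{leak}_{\text{EC}}$ and hence no need for the chain rule for smooth min-entropy. Consequently I would work directly with $H_{\min}^{\varepsilon}(\bm{X}|E)_{\varrho_{\bm{X}E}}$ rather than with a reduced quantity $H_{\min}^{\varepsilon}(\bm{X}|E')$.

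First I would write down the output of the leftover hashing lemma: extracting an $\ell$-bit string from $\bm{X}$ with a random two-universal hash function yields a $\Delta$-secret key where
\[
\Delta = 2\varepsilon + \frac{1}{2}\sqrt{2^{\,\ell - H_{\min}^{\varepsilon}(\bm{X}|E)_{\varrho_{\bm{X}E}}}}.
\]
Next I would substitute the lower bound $H_{\min}^{\varepsilon}(\bm{X}|E)_{\varrho_{\bm{X}E}} \ge n[2 - H(\hat{Q})]$ from Lemma~\ref{Lem:QRNG_smooth_H_min}, which applies because the energy test together with the Serfling inequality guarantees $\sum_{\bm{y}\in\mathcal{S}_{\hat{Q}}} q_{\bm{y}} = 1-\varepsilon$, exactly the hypothesis required by that lemma. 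Combined with the assumed key length $\ell \le n[2 - H(\hat{Q})] - \log(2/\varepsilon_\text{sec}^2)$, the exponent obeys $\ell - H_{\min}^{\varepsilon}(\bm{X}|E)_{\varrho_{\bm{X}E}} \le -\log(2/\varepsilon_\text{sec}^2)$, so that $2^{\,\ell - H_{\min}^{\varepsilon}} \le \varepsilon_\text{sec}^2/2$.

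Finally I would set $\varepsilon_\text{sec} = 4\varepsilon$ to control both terms of $\Delta$ at once. The first term becomes $2\varepsilon = \varepsilon_\text{sec}/2$, while the second is bounded by $\tfrac{1}{2}\sqrt{\varepsilon_\text{sec}^2/2} = \varepsilon_\text{sec}/(2\sqrt{2}) \le \varepsilon_\text{sec}/2$; adding the two gives $\Delta \le \varepsilon_\text{sec}$, which is precisely the claimed secrecy.

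I do not anticipate a genuine obstacle here, since all of the analytic content has already been carried out in Lemma~\ref{Lem:QRNG_smooth_H_min} and in the Serfling-inequality argument that connects the tested detection frequency $Q$ to the generation-round bound $\hat{Q}$. The only point demanding care is the constant bookkeeping: one must check that the choice $\varepsilon_\text{sec}=4\varepsilon$ makes each of the two contributions to $\Delta$ at most $\varepsilon_\text{sec}/2$, and in particular that the factor $1/\sqrt{2}$ in the second term---arising because the key-length formula carries $\log(2/\varepsilon_\text{sec}^2)$ rather than $\log(1/\varepsilon_\text{sec}^2)$---still leaves the total comfortably below $\varepsilon_\text{sec}$.
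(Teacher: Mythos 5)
Your proposal is correct and follows essentially the same route as the paper's own proof: apply the quantum leftover hashing lemma to get $\Delta = 2\varepsilon + \tfrac{1}{2}\sqrt{2^{\,\ell - H_{\min}^{\varepsilon}(\bm{X}|E)_{\varrho_{\bm{X}E}}}}$, substitute the bound from Lemma~\ref{Lem:QRNG_smooth_H_min}, and set $\varepsilon_{\text{sec}} = 4\varepsilon$ so that each term is at most $\varepsilon_{\text{sec}}/2$. Your explicit remarks---that no chain rule is needed since there is no error-correction leakage, and that the second term is in fact $\varepsilon_{\text{sec}}/(2\sqrt{2}) \le \varepsilon_{\text{sec}}/2$---only spell out details the paper leaves implicit.
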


\begin{proof}
According to quantum leftover hashing lemma \cite{renner2008security,tomamichel2011leftover}, the users can extract a $\Delta$-secret key of length $\ell$ from string $\bm{X}$, where
\begin{equation}
\Delta=2\varepsilon+\frac{1}{2}\sqrt{2^{\ell-H_{\text{min}}^{\varepsilon}(\bm{X}|E)_{\varrho_{\bm{X}E}}}}.
\end{equation}
Then choosing $\varepsilon_\text{sec}=4\varepsilon$, we have
\begin{equation}
\Delta \le 2\varepsilon+\frac{1}{2}\sqrt{2^{\ell-H_{\text{min}}^{\varepsilon}(\bm{X}|E)_{\varrho_{\bm{X}E}}}} \le \frac{\varepsilon_\text{sec}}{2}+\frac{\varepsilon_\text{sec}}{2}=\varepsilon_\text{sec},
\end{equation}
where we use Eq. \eqref{H_min_varepsilon}. Thus, this protocol is $\varepsilon_\text{sec}$-secret.
\end{proof}

\section{Summary}

In summary, we develop a powerful technique to calculate non-trivial lower bound on min-entropy as well as its smoothed version for a given classical-quantum state by reducing the computation to a problem of eigenvalues of the adversary state.  This eases the lower bound computation.  Using three examples, we demonstrate the usefulness of our one-shot min-entropy calculation technique in computing the upper bound on the information obtained by an adversary in both discrete and continuous variable finite-data size problems in quantum key distribution and quantum random number generation.

\begin{acknowledgments}
We thank Chao Wang for the discussion of experimental feasibility of our proposed SI-CV QRNG protocol. This work is supported by the RGC Grant No. 17303323 of the HKSAR Government.
\end{acknowledgments}

\bibliographystyle{apsrev4-2}

\bibliography{one-shot}

\providecommand{\noopsort}[1]{}\providecommand{\singleletter}[1]{#1}%
\begin{thebibliography}{42}%
\makeatletter
\providecommand \@ifxundefined [1]{%
 \@ifx{#1\undefined}
}%
\providecommand \@ifnum [1]{%
 \ifnum #1\expandafter \@firstoftwo
 \else \expandafter \@secondoftwo
 \fi
}%
\providecommand \@ifx [1]{%
 \ifx #1\expandafter \@firstoftwo
 \else \expandafter \@secondoftwo
 \fi
}%
\providecommand \natexlab [1]{#1}%
\providecommand \enquote  [1]{``#1''}%
\providecommand \bibnamefont  [1]{#1}%
\providecommand \bibfnamefont [1]{#1}%
\providecommand \citenamefont [1]{#1}%
\providecommand \href@noop [0]{\@secondoftwo}%
\providecommand \href [0]{\begingroup \@sanitize@url \@href}%
\providecommand \@href[1]{\@@startlink{#1}\@@href}%
\providecommand \@@href[1]{\endgroup#1\@@endlink}%
\providecommand \@sanitize@url [0]{\catcode `\\12\catcode `\$12\catcode
  `\&12\catcode `\#12\catcode `\^12\catcode `\_12\catcode `\%12\relax}%
\providecommand \@@startlink[1]{}%
\providecommand \@@endlink[0]{}%
\providecommand \url  [0]{\begingroup\@sanitize@url \@url }%
\providecommand \@url [1]{\endgroup\@href {#1}{\urlprefix }}%
\providecommand \urlprefix  [0]{URL }%
\providecommand \Eprint [0]{\href }%
\providecommand \doibase [0]{https://doi.org/}%
\providecommand \selectlanguage [0]{\@gobble}%
\providecommand \bibinfo  [0]{\@secondoftwo}%
\providecommand \bibfield  [0]{\@secondoftwo}%
\providecommand \translation [1]{[#1]}%
\providecommand \BibitemOpen [0]{}%
\providecommand \bibitemStop [0]{}%
\providecommand \bibitemNoStop [0]{.\EOS\space}%
\providecommand \EOS [0]{\spacefactor3000\relax}%
\providecommand \BibitemShut  [1]{\csname bibitem#1\endcsname}%
\let\auto@bib@innerbib\@empty
\bibitem [{\citenamefont {Wilde}(2013)}]{wilde2013quantum}%
  \BibitemOpen
  \bibfield  {author} {\bibinfo {author} {\bibfnamefont {M.~M.}\ \bibnamefont
  {Wilde}},\ }\href@noop {} {\emph {\bibinfo {title} {Quantum information
  theory}}}\ (\bibinfo  {publisher} {Cambridge university press},\ \bibinfo
  {year} {2013})\BibitemShut {NoStop}%
\bibitem [{\citenamefont {Konig}\ \emph {et~al.}(2009)\citenamefont {Konig},
  \citenamefont {Renner},\ and\ \citenamefont
  {Schaffner}}]{konig2009operational}%
  \BibitemOpen
  \bibfield  {author} {\bibinfo {author} {\bibfnamefont {R.}~\bibnamefont
  {Konig}}, \bibinfo {author} {\bibfnamefont {R.}~\bibnamefont {Renner}},\ and\
  \bibinfo {author} {\bibfnamefont {C.}~\bibnamefont {Schaffner}},\ }\href@noop
  {} {\bibfield  {journal} {\bibinfo  {journal} {IEEE Transactions on
  Information Theory}\ }\textbf {\bibinfo {volume} {55}},\ \bibinfo {pages}
  {4337} (\bibinfo {year} {2009})}\BibitemShut {NoStop}%
\bibitem [{\citenamefont {Bennett}\ and\ \citenamefont
  {Brassard}(1984)}]{BB84}%
  \BibitemOpen
  \bibfield  {author} {\bibinfo {author} {\bibfnamefont {C.~H.}\ \bibnamefont
  {Bennett}}\ and\ \bibinfo {author} {\bibfnamefont {G.}~\bibnamefont
  {Brassard}},\ }in\ \href@noop {} {\emph {\bibinfo {booktitle} {Proceedings of
  IEEE International Conference on Computers, Systems and Signal Processing}}}\
  (\bibinfo  {publisher} {IEEE},\ \bibinfo {year} {1984})\ pp.\ \bibinfo
  {pages} {175--179}\BibitemShut {NoStop}%
\bibitem [{\citenamefont {Herrero-Collantes}\ and\ \citenamefont
  {Garcia-Escartin}(2017)}]{herrero2017quantum}%
  \BibitemOpen
  \bibfield  {author} {\bibinfo {author} {\bibfnamefont {M.}~\bibnamefont
  {Herrero-Collantes}}\ and\ \bibinfo {author} {\bibfnamefont {J.~C.}\
  \bibnamefont {Garcia-Escartin}},\ }\href@noop {} {\bibfield  {journal}
  {\bibinfo  {journal} {Reviews of Modern Physics}\ }\textbf {\bibinfo {volume}
  {89}},\ \bibinfo {pages} {015004} (\bibinfo {year} {2017})}\BibitemShut
  {NoStop}%
\bibitem [{\citenamefont {Renner}(2008)}]{renner2008security}%
  \BibitemOpen
  \bibfield  {author} {\bibinfo {author} {\bibfnamefont {R.}~\bibnamefont
  {Renner}},\ }\href@noop {} {\bibfield  {journal} {\bibinfo  {journal}
  {International Journal of Quantum Information}\ }\textbf {\bibinfo {volume}
  {6}},\ \bibinfo {pages} {1} (\bibinfo {year} {2008})}\BibitemShut {NoStop}%
\bibitem [{\citenamefont {Tomamichel}\ \emph {et~al.}(2011)\citenamefont
  {Tomamichel}, \citenamefont {Schaffner}, \citenamefont {Smith},\ and\
  \citenamefont {Renner}}]{tomamichel2011leftover}%
  \BibitemOpen
  \bibfield  {author} {\bibinfo {author} {\bibfnamefont {M.}~\bibnamefont
  {Tomamichel}}, \bibinfo {author} {\bibfnamefont {C.}~\bibnamefont
  {Schaffner}}, \bibinfo {author} {\bibfnamefont {A.}~\bibnamefont {Smith}},\
  and\ \bibinfo {author} {\bibfnamefont {R.}~\bibnamefont {Renner}},\
  }\href@noop {} {\bibfield  {journal} {\bibinfo  {journal} {IEEE Transactions
  on Information Theory}\ }\textbf {\bibinfo {volume} {57}},\ \bibinfo {pages}
  {5524} (\bibinfo {year} {2011})}\BibitemShut {NoStop}%
\bibitem [{\citenamefont {Renner}(2007)}]{renner2007symmetry}%
  \BibitemOpen
  \bibfield  {author} {\bibinfo {author} {\bibfnamefont {R.}~\bibnamefont
  {Renner}},\ }\href@noop {} {\bibfield  {journal} {\bibinfo  {journal} {Nature
  Physics}\ }\textbf {\bibinfo {volume} {3}},\ \bibinfo {pages} {645} (\bibinfo
  {year} {2007})}\BibitemShut {NoStop}%
\bibitem [{\citenamefont {Christandl}\ \emph {et~al.}(2009)\citenamefont
  {Christandl}, \citenamefont {K{\"o}nig},\ and\ \citenamefont
  {Renner}}]{christandl2009postselection}%
  \BibitemOpen
  \bibfield  {author} {\bibinfo {author} {\bibfnamefont {M.}~\bibnamefont
  {Christandl}}, \bibinfo {author} {\bibfnamefont {R.}~\bibnamefont
  {K{\"o}nig}},\ and\ \bibinfo {author} {\bibfnamefont {R.}~\bibnamefont
  {Renner}},\ }\href@noop {} {\bibfield  {journal} {\bibinfo  {journal}
  {Physical Review Letters}\ }\textbf {\bibinfo {volume} {102}},\ \bibinfo
  {pages} {020504} (\bibinfo {year} {2009})}\BibitemShut {NoStop}%
\bibitem [{\citenamefont {Tomamichel}\ \emph {et~al.}(2009)\citenamefont
  {Tomamichel}, \citenamefont {Colbeck},\ and\ \citenamefont
  {Renner}}]{tomamichel2009fully}%
  \BibitemOpen
  \bibfield  {author} {\bibinfo {author} {\bibfnamefont {M.}~\bibnamefont
  {Tomamichel}}, \bibinfo {author} {\bibfnamefont {R.}~\bibnamefont
  {Colbeck}},\ and\ \bibinfo {author} {\bibfnamefont {R.}~\bibnamefont
  {Renner}},\ }\href@noop {} {\bibfield  {journal} {\bibinfo  {journal} {IEEE
  Transactions on Information Theory}\ }\textbf {\bibinfo {volume} {55}},\
  \bibinfo {pages} {5840} (\bibinfo {year} {2009})}\BibitemShut {NoStop}%
\bibitem [{\citenamefont {Dupuis}\ \emph {et~al.}(2020)\citenamefont {Dupuis},
  \citenamefont {Fawzi},\ and\ \citenamefont {Renner}}]{dupuis2020entropy}%
  \BibitemOpen
  \bibfield  {author} {\bibinfo {author} {\bibfnamefont {F.}~\bibnamefont
  {Dupuis}}, \bibinfo {author} {\bibfnamefont {O.}~\bibnamefont {Fawzi}},\ and\
  \bibinfo {author} {\bibfnamefont {R.}~\bibnamefont {Renner}},\ }\href@noop {}
  {\bibfield  {journal} {\bibinfo  {journal} {Communications in Mathematical
  Physics}\ }\textbf {\bibinfo {volume} {379}},\ \bibinfo {pages} {867}
  (\bibinfo {year} {2020})}\BibitemShut {NoStop}%
\bibitem [{\citenamefont {Arnon-Friedman}\ \emph {et~al.}(2018)\citenamefont
  {Arnon-Friedman}, \citenamefont {Dupuis}, \citenamefont {Fawzi},
  \citenamefont {Renner},\ and\ \citenamefont {Vidick}}]{arnon2018practical}%
  \BibitemOpen
  \bibfield  {author} {\bibinfo {author} {\bibfnamefont {R.}~\bibnamefont
  {Arnon-Friedman}}, \bibinfo {author} {\bibfnamefont {F.}~\bibnamefont
  {Dupuis}}, \bibinfo {author} {\bibfnamefont {O.}~\bibnamefont {Fawzi}},
  \bibinfo {author} {\bibfnamefont {R.}~\bibnamefont {Renner}},\ and\ \bibinfo
  {author} {\bibfnamefont {T.}~\bibnamefont {Vidick}},\ }\href@noop {}
  {\bibfield  {journal} {\bibinfo  {journal} {Nature Communications}\ }\textbf
  {\bibinfo {volume} {9}},\ \bibinfo {pages} {1} (\bibinfo {year}
  {2018})}\BibitemShut {NoStop}%
\bibitem [{\citenamefont {Tomamichel}\ and\ \citenamefont
  {Renner}(2011)}]{tomamichel2011uncertainty}%
  \BibitemOpen
  \bibfield  {author} {\bibinfo {author} {\bibfnamefont {M.}~\bibnamefont
  {Tomamichel}}\ and\ \bibinfo {author} {\bibfnamefont {R.}~\bibnamefont
  {Renner}},\ }\href@noop {} {\bibfield  {journal} {\bibinfo  {journal}
  {Physical Review Letters}\ }\textbf {\bibinfo {volume} {106}},\ \bibinfo
  {pages} {110506} (\bibinfo {year} {2011})}\BibitemShut {NoStop}%
\bibitem [{\citenamefont {Tomamichel}\ \emph {et~al.}(2012)\citenamefont
  {Tomamichel}, \citenamefont {Lim}, \citenamefont {Gisin},\ and\ \citenamefont
  {Renner}}]{tomamichel2012tight}%
  \BibitemOpen
  \bibfield  {author} {\bibinfo {author} {\bibfnamefont {M.}~\bibnamefont
  {Tomamichel}}, \bibinfo {author} {\bibfnamefont {C.~C.~W.}\ \bibnamefont
  {Lim}}, \bibinfo {author} {\bibfnamefont {N.}~\bibnamefont {Gisin}},\ and\
  \bibinfo {author} {\bibfnamefont {R.}~\bibnamefont {Renner}},\ }\href@noop {}
  {\bibfield  {journal} {\bibinfo  {journal} {Nature Communications}\ }\textbf
  {\bibinfo {volume} {3}},\ \bibinfo {pages} {1} (\bibinfo {year}
  {2012})}\BibitemShut {NoStop}%
\bibitem [{\citenamefont {Tomamichel}\ and\ \citenamefont
  {Leverrier}(2017)}]{tomamichel2017largely}%
  \BibitemOpen
  \bibfield  {author} {\bibinfo {author} {\bibfnamefont {M.}~\bibnamefont
  {Tomamichel}}\ and\ \bibinfo {author} {\bibfnamefont {A.}~\bibnamefont
  {Leverrier}},\ }\href@noop {} {\bibfield  {journal} {\bibinfo  {journal}
  {Quantum}\ }\textbf {\bibinfo {volume} {1}},\ \bibinfo {pages} {14} (\bibinfo
  {year} {2017})}\BibitemShut {NoStop}%
\bibitem [{\citenamefont {Wang}\ \emph {et~al.}(2021)\citenamefont {Wang},
  \citenamefont {Yin}, \citenamefont {Liu}, \citenamefont {Wang}, \citenamefont
  {Chen}, \citenamefont {Guo},\ and\ \citenamefont {Han}}]{wang2021tight}%
  \BibitemOpen
  \bibfield  {author} {\bibinfo {author} {\bibfnamefont {R.}~\bibnamefont
  {Wang}}, \bibinfo {author} {\bibfnamefont {Z.-Q.}\ \bibnamefont {Yin}},
  \bibinfo {author} {\bibfnamefont {H.}~\bibnamefont {Liu}}, \bibinfo {author}
  {\bibfnamefont {S.}~\bibnamefont {Wang}}, \bibinfo {author} {\bibfnamefont
  {W.}~\bibnamefont {Chen}}, \bibinfo {author} {\bibfnamefont {G.-C.}\
  \bibnamefont {Guo}},\ and\ \bibinfo {author} {\bibfnamefont {Z.-F.}\
  \bibnamefont {Han}},\ }\href@noop {} {\bibfield  {journal} {\bibinfo
  {journal} {Physical Review Research}\ }\textbf {\bibinfo {volume} {3}},\
  \bibinfo {pages} {023019} (\bibinfo {year} {2021})}\BibitemShut {NoStop}%
\bibitem [{\citenamefont {Coles}\ \emph {et~al.}(2017)\citenamefont {Coles},
  \citenamefont {Berta}, \citenamefont {Tomamichel},\ and\ \citenamefont
  {Wehner}}]{coles2017entropic}%
  \BibitemOpen
  \bibfield  {author} {\bibinfo {author} {\bibfnamefont {P.~J.}\ \bibnamefont
  {Coles}}, \bibinfo {author} {\bibfnamefont {M.}~\bibnamefont {Berta}},
  \bibinfo {author} {\bibfnamefont {M.}~\bibnamefont {Tomamichel}},\ and\
  \bibinfo {author} {\bibfnamefont {S.}~\bibnamefont {Wehner}},\ }\href@noop {}
  {\bibfield  {journal} {\bibinfo  {journal} {Reviews of Modern Physics}\
  }\textbf {\bibinfo {volume} {89}},\ \bibinfo {pages} {015002} (\bibinfo
  {year} {2017})}\BibitemShut {NoStop}%
\bibitem [{\citenamefont {M{\"u}ller-Quade}\ and\ \citenamefont
  {Renner}(2009)}]{muller2009composability}%
  \BibitemOpen
  \bibfield  {author} {\bibinfo {author} {\bibfnamefont {J.}~\bibnamefont
  {M{\"u}ller-Quade}}\ and\ \bibinfo {author} {\bibfnamefont {R.}~\bibnamefont
  {Renner}},\ }\href@noop {} {\bibfield  {journal} {\bibinfo  {journal} {New
  Journal of Physics}\ }\textbf {\bibinfo {volume} {11}},\ \bibinfo {pages}
  {085006} (\bibinfo {year} {2009})}\BibitemShut {NoStop}%
\bibitem [{\citenamefont {Portmann}\ and\ \citenamefont
  {Renner}(2022)}]{portmann2022security}%
  \BibitemOpen
  \bibfield  {author} {\bibinfo {author} {\bibfnamefont {C.}~\bibnamefont
  {Portmann}}\ and\ \bibinfo {author} {\bibfnamefont {R.}~\bibnamefont
  {Renner}},\ }\href@noop {} {\bibfield  {journal} {\bibinfo  {journal}
  {Reviews of Modern Physics}\ }\textbf {\bibinfo {volume} {94}},\ \bibinfo
  {pages} {025008} (\bibinfo {year} {2022})}\BibitemShut {NoStop}%
\bibitem [{\citenamefont {Vallone}\ \emph {et~al.}(2014)\citenamefont
  {Vallone}, \citenamefont {Marangon}, \citenamefont {Tomasin},\ and\
  \citenamefont {Villoresi}}]{vallone2014quantum}%
  \BibitemOpen
  \bibfield  {author} {\bibinfo {author} {\bibfnamefont {G.}~\bibnamefont
  {Vallone}}, \bibinfo {author} {\bibfnamefont {D.~G.}\ \bibnamefont
  {Marangon}}, \bibinfo {author} {\bibfnamefont {M.}~\bibnamefont {Tomasin}},\
  and\ \bibinfo {author} {\bibfnamefont {P.}~\bibnamefont {Villoresi}},\
  }\href@noop {} {\bibfield  {journal} {\bibinfo  {journal} {Physical Review
  A}\ }\textbf {\bibinfo {volume} {90}},\ \bibinfo {pages} {052327} (\bibinfo
  {year} {2014})}\BibitemShut {NoStop}%
\bibitem [{\citenamefont {Cao}\ \emph {et~al.}(2016)\citenamefont {Cao},
  \citenamefont {Zhou}, \citenamefont {Yuan},\ and\ \citenamefont
  {Ma}}]{cao2016source}%
  \BibitemOpen
  \bibfield  {author} {\bibinfo {author} {\bibfnamefont {Z.}~\bibnamefont
  {Cao}}, \bibinfo {author} {\bibfnamefont {H.}~\bibnamefont {Zhou}}, \bibinfo
  {author} {\bibfnamefont {X.}~\bibnamefont {Yuan}},\ and\ \bibinfo {author}
  {\bibfnamefont {X.}~\bibnamefont {Ma}},\ }\href@noop {} {\bibfield  {journal}
  {\bibinfo  {journal} {Physical Review X}\ }\textbf {\bibinfo {volume} {6}},\
  \bibinfo {pages} {011020} (\bibinfo {year} {2016})}\BibitemShut {NoStop}%
\bibitem [{\citenamefont {Ma}\ \emph {et~al.}(2016)\citenamefont {Ma},
  \citenamefont {Yuan}, \citenamefont {Cao}, \citenamefont {Qi},\ and\
  \citenamefont {Zhang}}]{ma2016quantum}%
  \BibitemOpen
  \bibfield  {author} {\bibinfo {author} {\bibfnamefont {X.}~\bibnamefont
  {Ma}}, \bibinfo {author} {\bibfnamefont {X.}~\bibnamefont {Yuan}}, \bibinfo
  {author} {\bibfnamefont {Z.}~\bibnamefont {Cao}}, \bibinfo {author}
  {\bibfnamefont {B.}~\bibnamefont {Qi}},\ and\ \bibinfo {author}
  {\bibfnamefont {Z.}~\bibnamefont {Zhang}},\ }\href@noop {} {\bibfield
  {journal} {\bibinfo  {journal} {npj Quantum Information}\ }\textbf {\bibinfo
  {volume} {2}},\ \bibinfo {pages} {1} (\bibinfo {year} {2016})}\BibitemShut
  {NoStop}%
\bibitem [{\citenamefont {Marangon}\ \emph {et~al.}(2017)\citenamefont
  {Marangon}, \citenamefont {Vallone},\ and\ \citenamefont
  {Villoresi}}]{marangon2017source}%
  \BibitemOpen
  \bibfield  {author} {\bibinfo {author} {\bibfnamefont {D.~G.}\ \bibnamefont
  {Marangon}}, \bibinfo {author} {\bibfnamefont {G.}~\bibnamefont {Vallone}},\
  and\ \bibinfo {author} {\bibfnamefont {P.}~\bibnamefont {Villoresi}},\
  }\href@noop {} {\bibfield  {journal} {\bibinfo  {journal} {Physical Review
  Letters}\ }\textbf {\bibinfo {volume} {118}},\ \bibinfo {pages} {060503}
  (\bibinfo {year} {2017})}\BibitemShut {NoStop}%
\bibitem [{\citenamefont {Avesani}\ \emph {et~al.}(2018)\citenamefont
  {Avesani}, \citenamefont {Marangon}, \citenamefont {Vallone},\ and\
  \citenamefont {Villoresi}}]{avesani2018source}%
  \BibitemOpen
  \bibfield  {author} {\bibinfo {author} {\bibfnamefont {M.}~\bibnamefont
  {Avesani}}, \bibinfo {author} {\bibfnamefont {D.~G.}\ \bibnamefont
  {Marangon}}, \bibinfo {author} {\bibfnamefont {G.}~\bibnamefont {Vallone}},\
  and\ \bibinfo {author} {\bibfnamefont {P.}~\bibnamefont {Villoresi}},\
  }\href@noop {} {\bibfield  {journal} {\bibinfo  {journal} {Nature
  Communications}\ }\textbf {\bibinfo {volume} {9}},\ \bibinfo {pages} {5365}
  (\bibinfo {year} {2018})}\BibitemShut {NoStop}%
\bibitem [{\citenamefont {Gerry}\ and\ \citenamefont
  {Knight}(2023)}]{gerry2023introductory}%
  \BibitemOpen
  \bibfield  {author} {\bibinfo {author} {\bibfnamefont {C.~C.}\ \bibnamefont
  {Gerry}}\ and\ \bibinfo {author} {\bibfnamefont {P.~L.}\ \bibnamefont
  {Knight}},\ }\href@noop {} {\emph {\bibinfo {title} {Introductory quantum
  optics}}}\ (\bibinfo  {publisher} {Cambridge university press},\ \bibinfo
  {year} {2023})\BibitemShut {NoStop}%
\bibitem [{\citenamefont {Masanes}\ \emph {et~al.}(2011)\citenamefont
  {Masanes}, \citenamefont {Pironio},\ and\ \citenamefont
  {Ac{\'\i}n}}]{masanes2011secure}%
  \BibitemOpen
  \bibfield  {author} {\bibinfo {author} {\bibfnamefont {L.}~\bibnamefont
  {Masanes}}, \bibinfo {author} {\bibfnamefont {S.}~\bibnamefont {Pironio}},\
  and\ \bibinfo {author} {\bibfnamefont {A.}~\bibnamefont {Ac{\'\i}n}},\
  }\href@noop {} {\bibfield  {journal} {\bibinfo  {journal} {Nature
  Communications}\ }\textbf {\bibinfo {volume} {2}},\ \bibinfo {pages} {238}
  (\bibinfo {year} {2011})}\BibitemShut {NoStop}%
\bibitem [{\citenamefont {Ekert}(1991)}]{ekert1991quantum}%
  \BibitemOpen
  \bibfield  {author} {\bibinfo {author} {\bibfnamefont {A.~K.}\ \bibnamefont
  {Ekert}},\ }\href@noop {} {\bibfield  {journal} {\bibinfo  {journal}
  {Physical Review Letters}\ }\textbf {\bibinfo {volume} {67}},\ \bibinfo
  {pages} {661} (\bibinfo {year} {1991})}\BibitemShut {NoStop}%
\bibitem [{\citenamefont {Ac{\'\i}n}\ \emph {et~al.}(2007)\citenamefont
  {Ac{\'\i}n}, \citenamefont {Brunner}, \citenamefont {Gisin}, \citenamefont
  {Massar}, \citenamefont {Pironio},\ and\ \citenamefont
  {Scarani}}]{acin2007device}%
  \BibitemOpen
  \bibfield  {author} {\bibinfo {author} {\bibfnamefont {A.}~\bibnamefont
  {Ac{\'\i}n}}, \bibinfo {author} {\bibfnamefont {N.}~\bibnamefont {Brunner}},
  \bibinfo {author} {\bibfnamefont {N.}~\bibnamefont {Gisin}}, \bibinfo
  {author} {\bibfnamefont {S.}~\bibnamefont {Massar}}, \bibinfo {author}
  {\bibfnamefont {S.}~\bibnamefont {Pironio}},\ and\ \bibinfo {author}
  {\bibfnamefont {V.}~\bibnamefont {Scarani}},\ }\href@noop {} {\bibfield
  {journal} {\bibinfo  {journal} {Physical Review Letters}\ }\textbf {\bibinfo
  {volume} {98}},\ \bibinfo {pages} {230501} (\bibinfo {year}
  {2007})}\BibitemShut {NoStop}%
\bibitem [{\citenamefont {Pironio}\ \emph {et~al.}(2009)\citenamefont
  {Pironio}, \citenamefont {Ac{\'\i}n}, \citenamefont {Brunner}, \citenamefont
  {Gisin}, \citenamefont {Massar},\ and\ \citenamefont
  {Scarani}}]{pironio2009device}%
  \BibitemOpen
  \bibfield  {author} {\bibinfo {author} {\bibfnamefont {S.}~\bibnamefont
  {Pironio}}, \bibinfo {author} {\bibfnamefont {A.}~\bibnamefont {Ac{\'\i}n}},
  \bibinfo {author} {\bibfnamefont {N.}~\bibnamefont {Brunner}}, \bibinfo
  {author} {\bibfnamefont {N.}~\bibnamefont {Gisin}}, \bibinfo {author}
  {\bibfnamefont {S.}~\bibnamefont {Massar}},\ and\ \bibinfo {author}
  {\bibfnamefont {V.}~\bibnamefont {Scarani}},\ }\href@noop {} {\bibfield
  {journal} {\bibinfo  {journal} {New Journal of Physics}\ }\textbf {\bibinfo
  {volume} {11}},\ \bibinfo {pages} {045021} (\bibinfo {year}
  {2009})}\BibitemShut {NoStop}%
\bibitem [{\citenamefont {Tomamichel}\ \emph {et~al.}(2010)\citenamefont
  {Tomamichel}, \citenamefont {Colbeck},\ and\ \citenamefont
  {Renner}}]{tomamichel2010duality}%
  \BibitemOpen
  \bibfield  {author} {\bibinfo {author} {\bibfnamefont {M.}~\bibnamefont
  {Tomamichel}}, \bibinfo {author} {\bibfnamefont {R.}~\bibnamefont
  {Colbeck}},\ and\ \bibinfo {author} {\bibfnamefont {R.}~\bibnamefont
  {Renner}},\ }\href@noop {} {\bibfield  {journal} {\bibinfo  {journal} {IEEE
  Transactions on Information Theory}\ }\textbf {\bibinfo {volume} {56}},\
  \bibinfo {pages} {4674} (\bibinfo {year} {2010})}\BibitemShut {NoStop}%
\bibitem [{\citenamefont {Furrer}\ \emph {et~al.}(2011)\citenamefont {Furrer},
  \citenamefont {{\AA}berg},\ and\ \citenamefont {Renner}}]{furrer2011min}%
  \BibitemOpen
  \bibfield  {author} {\bibinfo {author} {\bibfnamefont {F.}~\bibnamefont
  {Furrer}}, \bibinfo {author} {\bibfnamefont {J.}~\bibnamefont {{\AA}berg}},\
  and\ \bibinfo {author} {\bibfnamefont {R.}~\bibnamefont {Renner}},\
  }\href@noop {} {\bibfield  {journal} {\bibinfo  {journal} {Communications in
  Mathematical Physics}\ }\textbf {\bibinfo {volume} {306}},\ \bibinfo {pages}
  {165} (\bibinfo {year} {2011})}\BibitemShut {NoStop}%
\bibitem [{\citenamefont {M{\"u}ller-Lennert}\ \emph
  {et~al.}(2013)\citenamefont {M{\"u}ller-Lennert}, \citenamefont {Dupuis},
  \citenamefont {Szehr}, \citenamefont {Fehr},\ and\ \citenamefont
  {Tomamichel}}]{muller2013quantum}%
  \BibitemOpen
  \bibfield  {author} {\bibinfo {author} {\bibfnamefont {M.}~\bibnamefont
  {M{\"u}ller-Lennert}}, \bibinfo {author} {\bibfnamefont {F.}~\bibnamefont
  {Dupuis}}, \bibinfo {author} {\bibfnamefont {O.}~\bibnamefont {Szehr}},
  \bibinfo {author} {\bibfnamefont {S.}~\bibnamefont {Fehr}},\ and\ \bibinfo
  {author} {\bibfnamefont {M.}~\bibnamefont {Tomamichel}},\ }\href@noop {}
  {\bibfield  {journal} {\bibinfo  {journal} {Journal of Mathematical Physics}\
  }\textbf {\bibinfo {volume} {54}},\ \bibinfo {pages} {122203} (\bibinfo
  {year} {2013})}\BibitemShut {NoStop}%
\bibitem [{\citenamefont {Lo}\ and\ \citenamefont
  {Chau}(1999)}]{lo1999unconditional}%
  \BibitemOpen
  \bibfield  {author} {\bibinfo {author} {\bibfnamefont {H.-K.}\ \bibnamefont
  {Lo}}\ and\ \bibinfo {author} {\bibfnamefont {H.~F.}\ \bibnamefont {Chau}},\
  }\href@noop {} {\bibfield  {journal} {\bibinfo  {journal} {Science}\ }\textbf
  {\bibinfo {volume} {283}},\ \bibinfo {pages} {2050} (\bibinfo {year}
  {1999})}\BibitemShut {NoStop}%
\bibitem [{\citenamefont {Kraus}\ \emph {et~al.}(2005)\citenamefont {Kraus},
  \citenamefont {Gisin},\ and\ \citenamefont {Renner}}]{kraus2005lower}%
  \BibitemOpen
  \bibfield  {author} {\bibinfo {author} {\bibfnamefont {B.}~\bibnamefont
  {Kraus}}, \bibinfo {author} {\bibfnamefont {N.}~\bibnamefont {Gisin}},\ and\
  \bibinfo {author} {\bibfnamefont {R.}~\bibnamefont {Renner}},\ }\href@noop {}
  {\bibfield  {journal} {\bibinfo  {journal} {Physical Review Letters}\
  }\textbf {\bibinfo {volume} {95}},\ \bibinfo {pages} {080501} (\bibinfo
  {year} {2005})}\BibitemShut {NoStop}%
\bibitem [{\citenamefont {Renner}\ \emph {et~al.}(2005)\citenamefont {Renner},
  \citenamefont {Gisin},\ and\ \citenamefont {Kraus}}]{renner2005information}%
  \BibitemOpen
  \bibfield  {author} {\bibinfo {author} {\bibfnamefont {R.}~\bibnamefont
  {Renner}}, \bibinfo {author} {\bibfnamefont {N.}~\bibnamefont {Gisin}},\ and\
  \bibinfo {author} {\bibfnamefont {B.}~\bibnamefont {Kraus}},\ }\href@noop {}
  {\bibfield  {journal} {\bibinfo  {journal} {Physical Review A}\ }\textbf
  {\bibinfo {volume} {72}},\ \bibinfo {pages} {012332} (\bibinfo {year}
  {2005})}\BibitemShut {NoStop}%
\bibitem [{\citenamefont {Vitanov}\ \emph {et~al.}(2013)\citenamefont
  {Vitanov}, \citenamefont {Dupuis}, \citenamefont {Tomamichel},\ and\
  \citenamefont {Renner}}]{vitanov2013chain}%
  \BibitemOpen
  \bibfield  {author} {\bibinfo {author} {\bibfnamefont {A.}~\bibnamefont
  {Vitanov}}, \bibinfo {author} {\bibfnamefont {F.}~\bibnamefont {Dupuis}},
  \bibinfo {author} {\bibfnamefont {M.}~\bibnamefont {Tomamichel}},\ and\
  \bibinfo {author} {\bibfnamefont {R.}~\bibnamefont {Renner}},\ }\href@noop {}
  {\bibfield  {journal} {\bibinfo  {journal} {IEEE Transactions on Information
  Theory}\ }\textbf {\bibinfo {volume} {59}},\ \bibinfo {pages} {2603}
  (\bibinfo {year} {2013})}\BibitemShut {NoStop}%
\bibitem [{\citenamefont {Renner}\ and\ \citenamefont
  {Cirac}(2009)}]{renner2009finetti}%
  \BibitemOpen
  \bibfield  {author} {\bibinfo {author} {\bibfnamefont {R.}~\bibnamefont
  {Renner}}\ and\ \bibinfo {author} {\bibfnamefont {J.~I.}\ \bibnamefont
  {Cirac}},\ }\href@noop {} {\bibfield  {journal} {\bibinfo  {journal}
  {Physical Review Letters}\ }\textbf {\bibinfo {volume} {102}},\ \bibinfo
  {pages} {110504} (\bibinfo {year} {2009})}\BibitemShut {NoStop}%
\bibitem [{\citenamefont {Upadhyaya}\ \emph {et~al.}(2021)\citenamefont
  {Upadhyaya}, \citenamefont {van Himbeeck}, \citenamefont {Lin},\ and\
  \citenamefont {L{\"u}tkenhaus}}]{upadhyaya2021dimension}%
  \BibitemOpen
  \bibfield  {author} {\bibinfo {author} {\bibfnamefont {T.}~\bibnamefont
  {Upadhyaya}}, \bibinfo {author} {\bibfnamefont {T.}~\bibnamefont {van
  Himbeeck}}, \bibinfo {author} {\bibfnamefont {J.}~\bibnamefont {Lin}},\ and\
  \bibinfo {author} {\bibfnamefont {N.}~\bibnamefont {L{\"u}tkenhaus}},\
  }\href@noop {} {\bibfield  {journal} {\bibinfo  {journal} {PRX Quantum}\
  }\textbf {\bibinfo {volume} {2}},\ \bibinfo {pages} {020325} (\bibinfo {year}
  {2021})}\BibitemShut {NoStop}%
\bibitem [{\citenamefont {Kanitschar}\ \emph {et~al.}(2023)\citenamefont
  {Kanitschar}, \citenamefont {George}, \citenamefont {Lin}, \citenamefont
  {Upadhyaya},\ and\ \citenamefont {L{\"u}tkenhaus}}]{kanitschar2023finite}%
  \BibitemOpen
  \bibfield  {author} {\bibinfo {author} {\bibfnamefont {F.}~\bibnamefont
  {Kanitschar}}, \bibinfo {author} {\bibfnamefont {I.}~\bibnamefont {George}},
  \bibinfo {author} {\bibfnamefont {J.}~\bibnamefont {Lin}}, \bibinfo {author}
  {\bibfnamefont {T.}~\bibnamefont {Upadhyaya}},\ and\ \bibinfo {author}
  {\bibfnamefont {N.}~\bibnamefont {L{\"u}tkenhaus}},\ }\href@noop {}
  {\bibfield  {journal} {\bibinfo  {journal} {PRX Quantum}\ }\textbf {\bibinfo
  {volume} {4}},\ \bibinfo {pages} {040306} (\bibinfo {year}
  {2023})}\BibitemShut {NoStop}%
\bibitem [{\citenamefont {Barrett}\ \emph {et~al.}(2013)\citenamefont
  {Barrett}, \citenamefont {Colbeck},\ and\ \citenamefont
  {Kent}}]{barrett2013memory}%
  \BibitemOpen
  \bibfield  {author} {\bibinfo {author} {\bibfnamefont {J.}~\bibnamefont
  {Barrett}}, \bibinfo {author} {\bibfnamefont {R.}~\bibnamefont {Colbeck}},\
  and\ \bibinfo {author} {\bibfnamefont {A.}~\bibnamefont {Kent}},\ }\href@noop
  {} {\bibfield  {journal} {\bibinfo  {journal} {Physical Review Letters}\
  }\textbf {\bibinfo {volume} {110}},\ \bibinfo {pages} {010503} (\bibinfo
  {year} {2013})}\BibitemShut {NoStop}%
\bibitem [{\citenamefont {Clauser}\ \emph {et~al.}(1969)\citenamefont
  {Clauser}, \citenamefont {Horne}, \citenamefont {Shimony},\ and\
  \citenamefont {Holt}}]{clauser1969proposed}%
  \BibitemOpen
  \bibfield  {author} {\bibinfo {author} {\bibfnamefont {J.~F.}\ \bibnamefont
  {Clauser}}, \bibinfo {author} {\bibfnamefont {M.~A.}\ \bibnamefont {Horne}},
  \bibinfo {author} {\bibfnamefont {A.}~\bibnamefont {Shimony}},\ and\ \bibinfo
  {author} {\bibfnamefont {R.~A.}\ \bibnamefont {Holt}},\ }\href@noop {}
  {\bibfield  {journal} {\bibinfo  {journal} {Physical Review Letters}\
  }\textbf {\bibinfo {volume} {23}},\ \bibinfo {pages} {880} (\bibinfo {year}
  {1969})}\BibitemShut {NoStop}%
\bibitem [{\citenamefont {Hoeffding}(1994)}]{hoeffding1994probability}%
  \BibitemOpen
  \bibfield  {author} {\bibinfo {author} {\bibfnamefont {W.}~\bibnamefont
  {Hoeffding}},\ }\href@noop {} {\bibfield  {journal} {\bibinfo  {journal} {The
  collected works of Wassily Hoeffding}\ ,\ \bibinfo {pages} {409}} (\bibinfo
  {year} {1994})}\BibitemShut {NoStop}%
\bibitem [{\citenamefont {Lim}\ \emph {et~al.}(2013)\citenamefont {Lim},
  \citenamefont {Portmann}, \citenamefont {Tomamichel}, \citenamefont
  {Renner},\ and\ \citenamefont {Gisin}}]{lim2013device}%
  \BibitemOpen
  \bibfield  {author} {\bibinfo {author} {\bibfnamefont {C.~C.~W.}\
  \bibnamefont {Lim}}, \bibinfo {author} {\bibfnamefont {C.}~\bibnamefont
  {Portmann}}, \bibinfo {author} {\bibfnamefont {M.}~\bibnamefont
  {Tomamichel}}, \bibinfo {author} {\bibfnamefont {R.}~\bibnamefont {Renner}},\
  and\ \bibinfo {author} {\bibfnamefont {N.}~\bibnamefont {Gisin}},\
  }\href@noop {} {\bibfield  {journal} {\bibinfo  {journal} {Physical Review
  X}\ }\textbf {\bibinfo {volume} {3}},\ \bibinfo {pages} {031006} (\bibinfo
  {year} {2013})}\BibitemShut {NoStop}%
\end{thebibliography}%

\twocolumngrid

\end{document}